\def\ps@headings{%
\def\@oddhead{\mbox{}\scriptsize\rightmark \hfil \thepage}%
\def\@evenhead{\scriptsize\thepage \hfil \leftmark\mbox{}}%
\def\@oddfoot{}%
\def\@evenfoot{}}
\newcommand{\algwhole}[1]{\ifthenelse{\equal{#1}{p}}%
                     {Procedure CombinedUpdate}{CombinedUpdate}}
\newcommand{\depUpdate}[1]{\ifthenelse{\equal{#1}{p}}%
                     {Procedure DepthUpdate}{DepthUpdate}}
\newcommand{\greedy}[1]{\ifthenelse{\equal{#1}{p}}%
                     {Procedure GreedyTreeCover}{GreedyTreeCover}}
\newcommand{\single}[1]{\ifthenelse{\equal{#1}{p}}%
                     {Procedure SingleTreeAdjust}{SingleTreeAdjust}}
\newcommand{\mixnode}[1]{\ifthenelse{\equal{#1}{p}}%
                     {Procedure MixedNodeAdjust}{MixedNodeAdjust}}
\newcommand{\internal}[1]{$#1$-internal}
\newcommand{\parent}[1]{$#1$-parent}
\newcommand{\child}[1]{$#1$-child}
\newcommand{\leaf}[1]{$#1$-leaf}
\newcommand{\leafs}[1]{$#1$-leaves}
\newcommand{\mix}[2]{mixed-$#1$-$#2$-node}
\newcommand{\flo}[1]{\lfloor #1 \rfloor}
\newcommand{\cei}[1]{\lceil #1 \rceil}
\newcommand{\tZ}{\tilde{Z}}
\newtheorem{theorem}{Theorem}[section]
\newtheorem{prop}[theorem]{Proposition}
\newtheorem{lemma}[theorem]{Lemma}
\newtheorem{assumption}{Assumption}
\newtheorem{defi}{Definition}
\newcommand{\post}[2]{
\centering \leavevmode
 \includegraphics[width=#2cm]{#1}
 }
\crefname{assumption}{assumption}{assumptions}
\crefname{prop}{proposition}{propositions}
\crefname{algorithm}{procedure}{Procedures}
\title{Tree dynamics for peer-to-peer streaming}
\author{
\authorblockN{Ji Zhu, Bruce Hajek}
\authorblockA{
 University of Illinois at Urbana-Champaign\\
 jizhu1@illinois.edu,  b-hajek@illinois.edu
 }
 
}
\begin{document}

\maketitle
\begin{abstract}
This paper presents an asynchronous distributed algorithm to manage multiple trees for peer-to-peer streaming in a flow level
model. It is assumed that videos are cut into
substreams, with or without source coding, to be distributed to all nodes.
The algorithm guarantees that each node receives sufficiently many
substreams within delay logarithmic in the number of peers. The algorithm works by constantly updating the topology so
that each substream is distributed through trees to as many nodes
as possible without interference. Competition among trees for limited
upload capacity  is managed so that both coverage and balance are
achieved. The algorithm is robust in that it efficiently eliminates
cycles and maintains tree structures in a distributed way. The algorithm
favors nodes with higher degree, so it not only works for live streaming
and video on demand, but also in the case a few nodes with large degree
act as servers and other nodes act as clients.

A proof of convergence of the algorithm is given assuming instantaneous
update of depth information, and for the case
of a single tree it is shown that the convergence time is
stochastically tightly bounded by a small constant times
the log of the number of nodes.   These theoretical results are
complemented by simulations showing that the algorithm works well even when
most assumptions for the theoretical tractability do not hold.




\end{abstract}

\section{Introduction}
Peer-to-peer communication, when applied to streaming, is attractive because
it enables the total bandwidth for service to scale with demand \cite{liu2010uusee,zhang2005coolstreaming}.
Scheduling algorithms need to be designed so that upload capacity provided by nodes can be
utilized efficiently to serve the demand, and each node can download streams, providing playback continuity with small delay. The system needs to be
robust enough to tolerate peer churn, link failures, and congestion.

Many designs have been proposed for P2P streaming.   Unstructured topologies with nodes constantly sampling new
targets for new pieces are considered in \cite{zhou2007simple,wu09}.   Data dissemination algorithms based on
mesh topologies are given in \cite{magharei2009prime,kostic2003bullet}. In \cite{kim_srikant_cycle},
random Hamiltonian cycles are constructed and tangled and pieces are broadcasted around the
union of the cycles. Fixed underlying topologies are considered in \cite{tomozei2010flow,nguyen2010chameleon} and flows are scheduled between neighbors. Algorithms to manage multiple distribution trees to disseminate different substreams of a video or
audio are discussed in  \cite{padmanabhan2003resilient,castro2003splitstream,tran2003zigzag,zhang2012overlay},  but those papers do not concentrate on distributed algorithms.

Unstructured streaming systems are simple to manage and scale well,  but playback continuity is sacrificed because constantly
building and removing links requires prohibitive overhead.   Tree structures can provide good
playback continuity with small startup delay, but can be difficult to manage, especially in a distributed way.
In this paper we study how to manage trees for P2P streaming as in \cite{padmanabhan2003resilient,castro2003splitstream,tran2003zigzag}, but with a focus on distributed
algorithms.



Consider a complete underlying network for control information, so arbitrary node to node contact is allowed. To model the bandwidth constraint for control information, we only allow each node to randomly contact a target  from other nodes periodically at a certain constant rate. This setting, which is more suitable for P2P systems, is different from settings in \cite{Brosh_multicast,Gallager_MST,Jacobsen_multicast,shi2002routing,widmer2012rate}, which discuss how to build multicast trees satisfying certain metrics under a fixed underlying topology. Most problems formulated in \cite{Brosh_multicast,Jacobsen_multicast,shi2002routing,widmer2012rate} are shown to be NP hard and approximation algorithms are designed. In this paper we avoid NP completeness  with a homogeneous and complete  underlying topology. 

 The streaming network is built on top of the underlying network, through the cooperation of nodes. Besides the bandwidth constraint on exchanging control information, nodes have heterogeneous upload capacity so each of them has a maximum fan-out degree for streaming.  Nodes have a small buffer to store information about their parents and children in the streaming network. They can exchange messages with their parents and children, at the same time they can also exchange messages with their sampled targets at the sampling times.
As in \cite{padmanabhan2003resilient,castro2003splitstream,tran2003zigzag}, we assume the video is cut into substreams, source coding like multiple description coding (MDC \cite{goyal2001multiple}) is applied to provide redundancy in data. Multiple diverse distribution trees, which constitute the streaming network, are constructed and managed, each for one substream. 

In \cite{padmanabhan2003resilient} it is mentioned that a good tree management algorithm should maintain 1) short trees, i.e., trees have small depths so as to minimize the probability of disruption due to peer transience or congestion; 2) tree diversity, i.e., the set of ancestors of a node in each tree are as disjoint as possible so as to increase the effectiveness of the MDC-based distribution scheme; 3) quick processing of node joins and leaves, and; 4) scalability.  Centralized solutions in \cite{padmanabhan2003resilient,castro2003splitstream,tran2003zigzag} are proposed to achieve those goals.
In this paper, an asynchronous distributed algorithm is designed to manage multiple trees, with the following properties:
\begin{enumerate}
\item Each node can receive enough substreams.
\item Depths of trees are logarithmic in the number of peers.
\item Trees are diverse and balanced.
\item Cycles are eliminated efficiently in a distributed way.
\item Convergence is fast, providing robustness to peer transience.
\item Nodes with higher upload capacity tend to be closer to the roots of the trees.
\item  Heterogeneous upload capacity is supported, even in the case a few nodes
with large degree act as servers and other nodes act as clients.
\item  Convergence is insured even when the ratio of total demand to total upload capacity, $\rho$, is one, and it
converges more quickly as $\rho$ decreases from one.
\end{enumerate}

Analyzing the complexity for message exchange for the algorithm is quite challenging. We show that the convergence time is stochastically tightly bounded by $O(\log N)$ where $N$ is the number of nodes, both by theoretical analysis in the case of a single tree and by simulation. During a sampling interval $O(N)$ messages are exchanged. So we conjecture that with high probability,
only $O(N\log N)$ messages need to be exchanged before the algorithm converges. 


Other related work for P2P message transmission includes \cite{cohen03,zhu_swarm,ji2011,norros2011stability,massoulie2005,menasch10,bruce2011,oguz2012stable,sanghavi2007gossiping}, which focus on modeling the performance of file sharing networks.

The paper is organized as follows. The model is introduced in \Cref{sec_problem}. The main algorithm and the proof of convergence are provided in \Cref{sec_alg}. Bounds on the convergence time are covered in \Cref{sec_converge}.   Simulations are provided
in \Cref{sec.sim} which show that the algorithm works well even when most assumptions imposed in \Cref{sec_alg} for theoretical tractability do not hold.

\section{Problem setup}\label{sec_problem}

Consider a network containing one server and $N$ users (nodes), labeled as $1,2,...N$. One video to be broadcast from the server to all nodes is cut into $M$ substreams.  Each substream is transmitted through a directed broadcast tree. We consider the problem of how to build broadcast trees, so as to avoid interference, achieve coverage and reduce delay.

\begin{figure}
\post{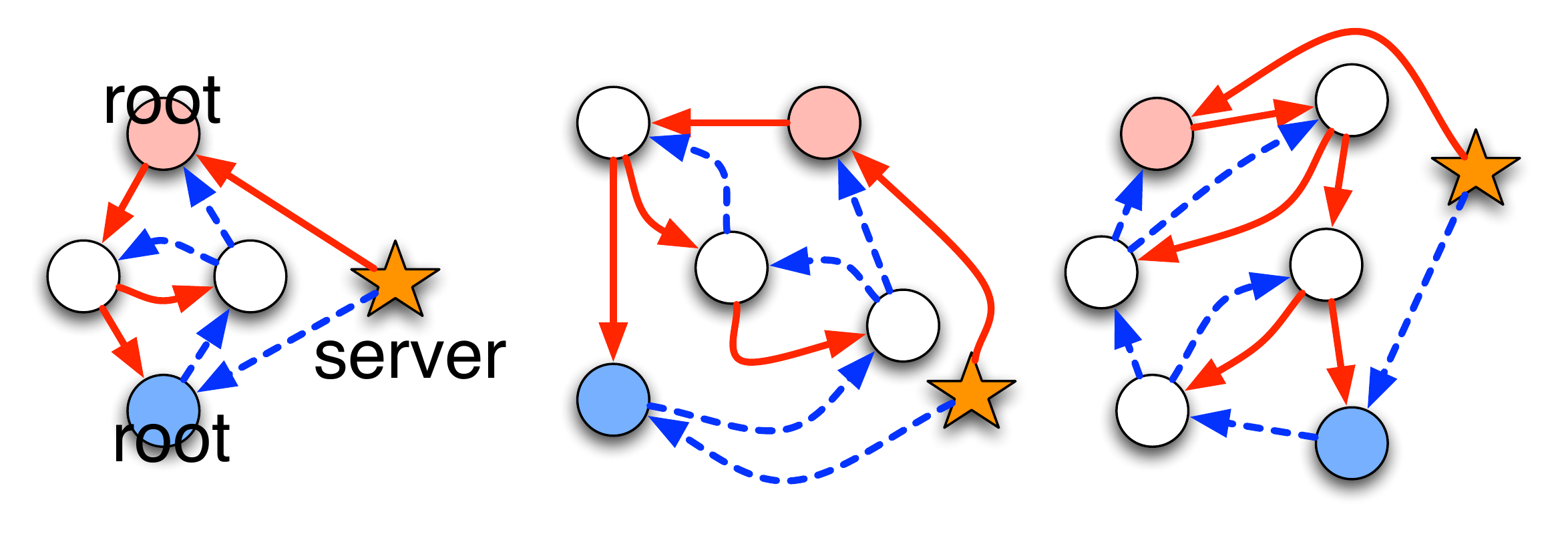}{6.5}
\caption{Spanning trees with $\log N$ depth, $N=4,5,6$.}
\label{fig_pTree}
\end{figure}

In this paper we consider a flow level model. Let $V$ denote the set of $N$ nodes. As illustrated in \Cref{fig_pTree}, assume
there are $M$ root nodes (or roots, $M<< N$) in $V$, each of which always has an incoming link from the server, and always receives a distinct substream via such link. Each root works as an ``agent'' of the server to further distribute its received substream.
Let $\mathcal{R}$  denote the set of roots. For convenience, assume the root nodes are nodes 1 through $M$, and
label each substream by the label of the root receiving the stream from the server.

Assume nodes in $V$ can randomly contact each other and build directed links among themselves. Let $E$ denote the set of all such links. Through each link one and only one substream can be transmitted from the tail to the head of the link. Assume that at the time a link is built, the substream to be transmitted on it is also determined.   Assume each link is colored by the label of the substream transmitted on it. Let $E_i$ denote the set of all links with color $i$ for $i\in\mathcal{R}$. The set of all $E_i$'s is a partition of $E$.
{\em A node $u$ can receive substream $i$ if and only if in graph $(V,E_i)$ there is a directed path from root $i$ to $u$; and the delay of receiving substream $i$ is modeled by the number of hops of the shortest path from $i$ to $u$}. Let $V_i$ denote the set of nodes to which there exists a path from root $i$ in $(V,E_i)$. That is, $V_i$ is the set of nodes which can receive substream $i$.

Assume that to recover the video, a node $u$ must receive at least $K\leq M$ substreams: $|\{i\in\mathcal{R}: u\in V_i\}|\geq K$. 
It is possible that $K<M$, corresponding to the use of source coding.  Assume each node has a constraint on upload capacity
(outdegree): node $u$ can build at most $\bar{d}_u$ outgoing links, whatever their colors are.
\Cref{fig_pTree} contains examples of $(V,E)$ for $M=K=2$ and $N=4,5,6$, where roots have outdegree one and other nodes have outdegree two. 
In \Cref{fig_pTree},  for each root $i$, $(V,E_i)$ is a spanning tree with the minimum depth under out degree constraint, with the tree depth defined as the maximum number of hops over all root-leaf paths.

\begin{figure}
\post{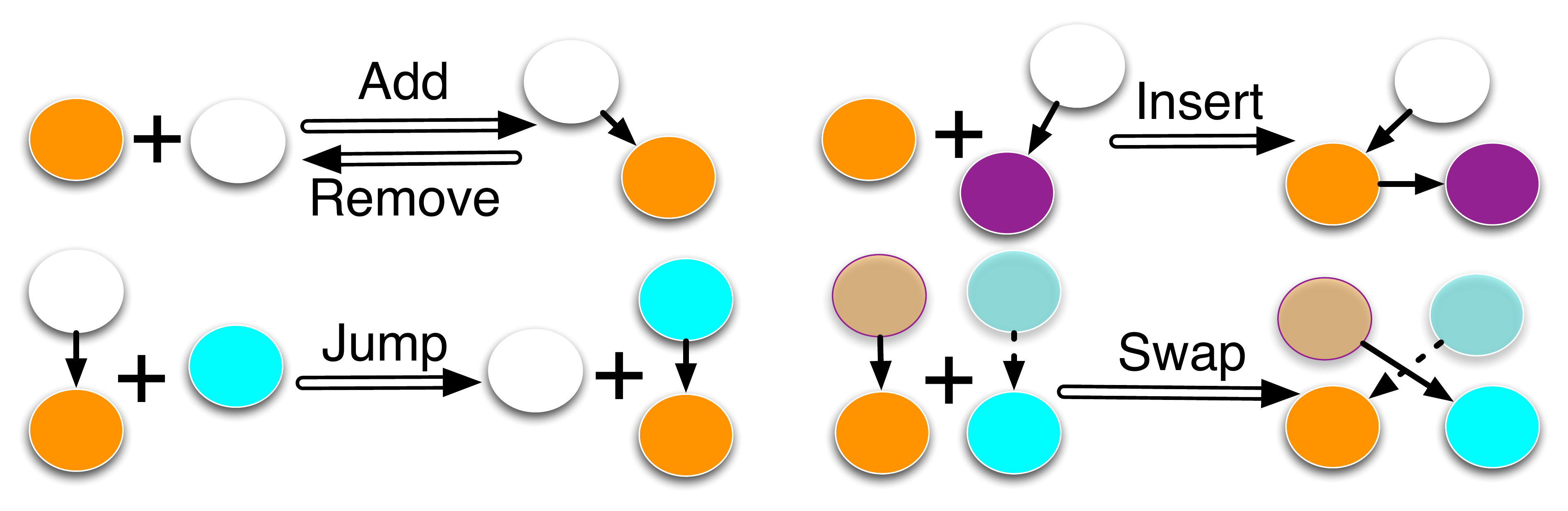}{7}
\caption{All five types of link updates. Link colors in the ``Swap'' transform may or may not be the same.}
\label{fig_linkChange}
\end{figure}

Five types of link updates are considered in this paper, as shown in \Cref{fig_linkChange}. Each link update can be
executed locally because at most four nodes are involved. 
Notice that link updates in \Cref{fig_linkChange} are just combinations of building a link and removing a link.
Any node on the left side of a link update in \Cref{fig_linkChange} can initiate that update, by exchanging messages with other nodes involved. 

Assume each node maintains a Poisson clock which ticks at rate $\mu=1$, independently of Poisson clocks of other nodes. Whenever the clock of a node ticks, the node samples a target from other nodes uniformly at random, and decides whether to execute link updates in \Cref{fig_linkChange} or not.
 {\em In this paper, we assume link updates happen instantaneously, but at most one update can be executed at each sampling. } It makes the problem tractable and is also a reasonable relaxation because one link update consists of building at most two links and removing at most two links. In this paper we normalize the time so $\mu=1$.

In \Cref{fig_pTree}, the spanning tree for each substream has
$\log N$ depth. In the next section, we show our algorithm, under which $E$ is repeatedly updated until no more updates
are possible, insures that all substreams can be broadcasted through distinct trees with $\log N$ depth and each node can 
receive enough substreams for the video.   For convenience, notations are listed: 
\begin{itemize}


\item{$L_i(u):$} the \em{depth} to $i$ of $u$, defined as the minimum number of hops from root $i\in\mathcal{R}$ to $u$ in $(V,E_i)$.
Define $L_i(u) = +\infty$ if no path exists from $i$ to $u$ in $(V,E_i)$. 
\item{$l_i^u:$} the depth to $i$ buffered by $u$, which keeps updating.
\item{$V_i: $} for each $ i\in\mathcal{R}, V_i: = \{u:u\in V, L_i(u)<+\infty\}$.
\item{$i$-link: } an $i$-link refers to a link colored $i$.
\item{\parent{i}: }  $u$ is an \parent{i} of $v$ if $(u,v)\in E_i$.
\item{\child{i}: } $v$ is an \child{i} of $u$ if $(u,v) \in E_i$.
\item{\leaf{i}: } Node $u$ is an \leaf{i} if $d_i(u) = 0$.
\item{\internal{i}: } Node $u$ is  an \internal{i} node if $L_i(u) + 2\leq \max_{w\in V_i} L_i(w)$.
\item{$d_i(u): $} the number of outgoing $i$-links of node $u$. Let $d(u) = \sum_{i\in\mathcal{R}} d_i(u)$ be the total number of outgoing links of $u$.
\item{available: } Node $u$ is \em{available} if $u$ has an incoming link and $d(u) < \bar{d}_u$.
\item{mixed node:} A node is mixed if it has outgoing links with at least two colors.
\item{\mix{i}{j}: } Node $u$ is a \mix{i}{j} if $d_i(u)d_j(u) >0$.
\item Write $(a_{i}) < (b_{i})$ for two sequences to mean $\forall i, a_{i} < b_{i}$. 
\end{itemize}


\section{Algorithm}\label{sec_alg}
Our main algorithm is summarized in \Cref{sec_whole} as \algwhole{p}, which is a topology update  procedure ran after nodes randomly contact each other. In the following we introduce \algwhole{0} by parts. Notice that  by assumption in \Cref{sec_problem} running of \algwhole{0} is instantaneous. 
Assume each node knows its parents, children,  and
colors of its incoming and outgoing links.  Each node buffers
its depths to all roots. 
To begin, we assume at time $0$ $(V,E)$ satisfies \Cref{a.init}.

\begin{assumption}[\bf{Tree Initially}]\label{a.init}
~

At time $0$, $(V,E)$ satisfies the following:
\begin{enumerate}
\item Each node $u$ has at most $K$ incoming links, has at most one incoming $i$-link for each $i\in\mathcal{R}$, and at most $\bar{d}_u$ outgoing links, which implies that,
\item $(V_i,E_i)$ is a directed tree rooted at $i$ for each $i\in\mathcal{R}$.
\end{enumerate}
\end{assumption}
Notice that  1) implies 2) in \Cref{a.init}. And notice that cycles may appear in $(V\setminus V_i, E_i)$ even if $(V_i,E_i)$ is a tree,  as shown in \Cref{f.cycle}. \Cref{a.init} can be easily satisfied by requiring all nodes to remove extra incoming or outgoing links at time $0$. We will show that the properties in \Cref{a.init} are preserved by the update procedures we shall define.
In the following the procedure for depth update is discussed first.

\subsection{Distributed depth update}
\label{sec.dDist}
Each node buffers its depths to each root $i\in\mathcal{R}$. Notice $l_i^u$ denotes the depth to $i$ buffered by node $u$. 

\begin{algorithm}
\renewcommand{\thealgorithm}{}
\caption{\protect\depUpdate{0} (Node $u$, Color $i$)}
\label{a.depUpdate}

If node $u$ is root $i$, $u$ sets $l_i^u$ to $0$; otherwise,
\begin{itemize}
\item if $u$ has no incoming $i$-links, $u$ sets $l_i^u$ to $+\infty$;
\item if $u$ has incoming $i$-links, let $l$ be the minimum depth to $i$ buffered by all \parent{i}s of $u$, $u$ sets $l_i^u$ to $l+1$.
\end{itemize}
\end{algorithm}

We think $\infty+1=\infty$ in  \depUpdate{0}. Root $i$ sets its depth to $i$ as $0$, and any other node updates its depth to $i$ as $l+1$ if it finds that the minimum depth to $i$ buffered by its \parent{i}s is $l$. Each node $u$ periodically runs \depUpdate{0} so as to insure its buffered depths are close approximations to the real depths $L_i(u), i\in\mathcal{R}.$  We will show that each node maintains just one incoming $i$-link over all time, so a node needs to contact just one \parent{i} when \depUpdate{0} is running. 

We do not require nodes update depths much more frequenctly than they sample targets. Assume node $u$ updates depths for three types of events:
\begin{itemize}
\item after $u$ builds a new incoming $i$-link, immediately $u$ runs \depUpdate{0}($u$,$i$). 
\item after $u$ samples a target, $u$ runs \depUpdate{0}($u$,$j$) for all $j\in\mathcal{R}$ immediately. 
\item after $u$ is sampled as a target, $u$ runs \depUpdate{0}($u$,$j$) for all $j\in\mathcal{R}$ immediately.
\end{itemize}
Thus, nodes update depths about twice as fast as their Poisson clocks tick. 

\subsection{Distributed greedy covering}

In this section  a greedy procedure is proposed to  insure that each node has at least $K$ incoming links with distinct colors. Notice that nodes are randomly sampling others. Assume a node $u$ runs \greedy{p} after it samples node  $u_p$ as a target:
\begin{algorithm}
\renewcommand{\thealgorithm}{}
\caption{\protect\greedy{0} (Node $u$, Node $u_p$)}
\label{a.cover}

{\bf Output: } return true if $(V,E)$ changes, return false otherwise.

{\bf Tie broken: } arbitrarily

If $u$ has less than $K$ incoming links and there exists $i$ such that $u$ has no incoming $i$-link but $u_p$ has an incoming $i$-link,
\begin{itemize}
\item{\bf Add:} if $d(u_p)<\bar{d}_{u_p}$, build $(u_p,u)$ in $E_i$, and return true;
\item{\bf Insert:} if $u_p$ has an \child{i}, say $u_c$, remove $(u_p, u_c)$, build $(u_p,u),(u,u_c)$ in $E_i$, and return true.
\end{itemize}
Return false. (See \Cref{f.addinsert})
\end{algorithm}

\begin{figure}
\post{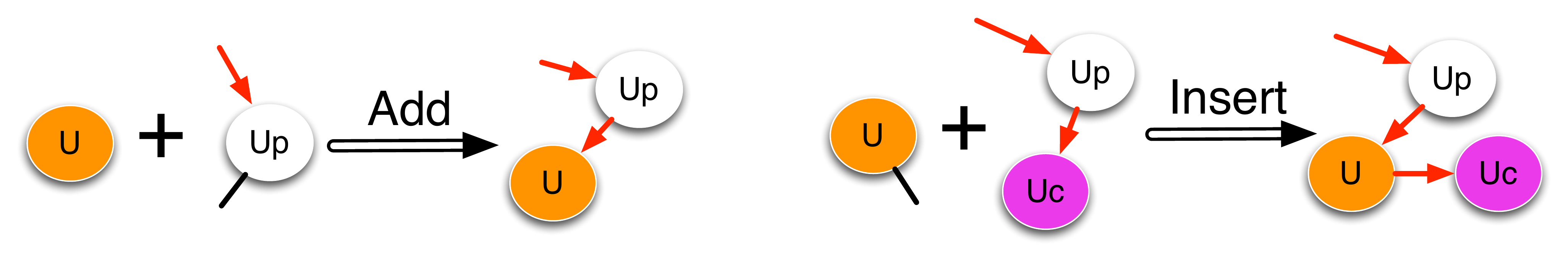}{7.5}
\caption{Greedy Tree Cover}
\label{f.addinsert}
\end{figure}

\greedy{0} does not use depth information. It has several properties: if \Cref{a.init} is valid, as nodes sample targets and run \greedy{0}, 
\begin{enumerate}
\item statements in \Cref{a.init} remain valid;
\item for each $i\in\mathcal{R}$, $|V_i|$ is nondecreasing, but the depth of tree $(V_i,E_i)$ is also nondecreasing.
\item for each $i\in\mathcal{R}$ and each node $u$, both $d_i(u)$ and the number of incoming $i$-links of $u$ are nondecreasing.
\end{enumerate}

In the following we state several additional assumptions under which running \greedy{0} after nodes sample targets can lead all nodes to be covered by $K$ trees. 

First, nodes have to provide enough fan-out degrees to meet the demands of incoming links, so \Cref{a.link} is assumed.
\begin{assumption}[\bf{Minimum Degree}]\label{a.link}
$\sum_{u\in V}\bar{d}_u \geq KN - M$.
\end{assumption}
Second, notice that for $i\in\mathcal{R}$, if at time $0$ root $i$ is not available and root $i$ does not have outgoing $i$-links, it is not possible for any node to ever build incoming $i$-links from $i$. That is, $V_i = \{i\}$ over all time. To avoid that, we assume root $i$ has at least one \child{i} at time $0$:
\begin{assumption}[\bf{Root Child Guarentee}]\label{a.root}
For each $i\in\mathcal{R}$, $\bar{d}_i \geq 1$ and at time $0$ root $i$ has at least one \child{i}. 
\end{assumption}

\begin{figure}
\post{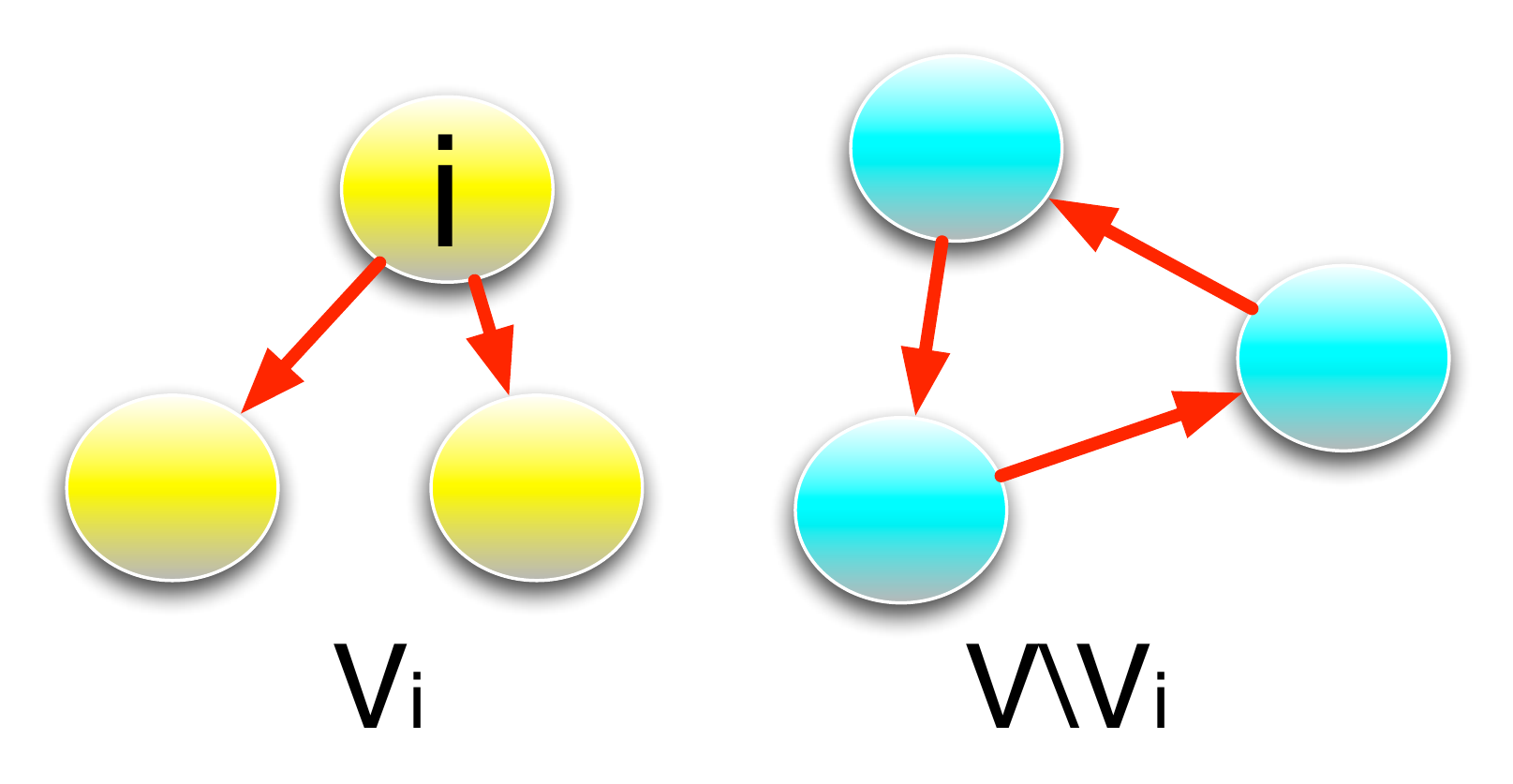}{3}
\caption{Cycles can appear at subgraph $(V\setminus V_i, E_i)$.}
\label{f.cycle}
\end{figure}
Third, \greedy{0} does not generate new cycles, but it cannot eliminate original cycles. As shown in \Cref{f.cycle}, $(V\setminus V_i, E_i)$ may contain cycles at time $0$,  which cannot be eliminated by \greedy{0}. In this section we pause discussions on cycles by assuming \Cref{a.noCycle} is valid, in the next section we will show how cycles are eliminated.

\begin{assumption}[\bf{No Cycle}]\label{a.noCycle}
At time $0$ in $(V,E)$, for any $i\in\mathcal{R}$, any node $u$ with $d_i(u)=+\infty$ does not have incoming $i$-links.
\end{assumption}
\Cref{a.noCycle} implies that no cycle exists in $(V\setminus V_i,E_i)$ for each $i\in\mathcal{R}$ at time $0$. By running \greedy{0} no $i$-links will be built beween nodes in $V\setminus V_i$ and so no cycle ever appears. The following two indicate the convergence of running \greedy{0}.

\begin{lemma}\label{lem.static}
If \Cref{a.link,a.init,a.root,a.noCycle} are valid, then at time $0$, \greedy{0}($u$,$v$) returns false for all $u,v\in V$ if and only if $|\{i\in\mathcal{R}: u\in V_i\}| = K$ for each node $u$, that is, if and only if each node is covered by $K$ trees.
\end{lemma}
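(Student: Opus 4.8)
The plan is to prove both implications of the biconditional by translating the two literal tests inside \greedy{0} — ``$u$ has less than $K$ incoming links'' and ``there exists $i$ such that $u$ has no incoming $i$-link but $u_p$ has an incoming $i$-link'' — into statements about tree membership. The first thing I would record is the dictionary supplied by \Cref{a.noCycle} together with \Cref{a.init}: for every color $i$ and every non-root node $v$, having an incoming $i$-link is equivalent to $v\in V_i$, since by \Cref{a.noCycle} a node outside $V_i$ carries no incoming $i$-link, while a node inside $V_i$ (other than root $i$) lies on a path from $i$ and hence has one. Counting root $i$'s server feed as its incoming $i$-link, this extends to all nodes: $v\in V_i$ iff $v$ possesses substream $i$, and the covering count $c(u):=|\{i\in\mathcal{R}: u\in V_i\}|$ equals the number of incoming links of $u$. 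By \Cref{a.init} this count is at most $K$, so $c(u)\le K$ for every $u$. Under this dictionary the opening gate of \greedy{0} reads $c(u)<K$, and the color test reads ``$\exists i:\ u\notin V_i$ and $u_p\in V_i$.''

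($\Leftarrow$) For the easy direction I assume every node is covered by exactly $K$ trees, i.e.\ $c(u)=K$ for all $u$. Then the opening gate $c(u)<K$ fails for every $u$, so \greedy{0}$(u,v)$ returns false regardless of $v$.

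($\Rightarrow$) For the converse I argue by contradiction: suppose \greedy{0}$(u,v)$ returns false for all $u,v$, yet some node $w$ has $c(w)<K$. Since $c(w)<K\le M$, there is a color $i$ with $w\notin V_i$, so the gate $c(w)<K$ is open and the color test holds for this $i$ against any supplier drawn from $V_i$. I then exhibit an explicit supplier. By \Cref{a.root}, root $i$ has at least one \child{i}, say $u_c$; thus root $i\in V_i$ possesses substream $i$ and has an outgoing $i$-link, while $w\notin V_i$. Running \greedy{0}$(w,\text{root }i)$, color $i$ qualifies and the Insert branch applies — remove $(\text{root }i,u_c)$ and build $(\text{root }i,w),(w,u_c)$ in $E_i$ — so the call returns true, contradicting the standing assumption. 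Hence no such $w$ exists, giving $c(u)\ge K$; with $c(u)\le K$ from the dictionary this forces $c(u)=K$ for every node. The same Insert move is in fact available from any node of $V_i$ possessing an \child{i}, so the choice of root $i$ as supplier is merely the most convenient one.

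I expect the main obstacle to be the bookkeeping around roots and the precise reading of the two tests in \greedy{0}. The literal phrase ``incoming $i$-link'' must be reconciled with tree membership: root $i$ originates its own substream without an edge in $E$, so I must justify treating root $i$ as possessing substream $i$ (through the server feed) both when computing $c(\cdot)$ and when qualifying it as a supplier. Without this reconciliation a tree that happens to be a full depth-one star would spuriously appear ``stuck,'' and the converse would break. Establishing the $V_i$-dictionary of the first paragraph, and observing that it is exactly what guarantees the Insert supplier under \Cref{a.root}, is therefore the crux; the two implications are then immediate. The argument uses \Cref{a.init}, \Cref{a.root} and \Cref{a.noCycle} directly, with \Cref{a.root} the decisive one for necessity, while \Cref{a.link} is needed elsewhere to ensure a fully covered configuration is capacity-feasible rather than for the equivalence itself.
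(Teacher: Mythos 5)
Your proof is correct, and it reaches the ``only if'' direction by a genuinely shorter route than the paper does. The paper first invokes \Cref{a.init} together with \Cref{a.link} to count edges, $|E|\leq K(N-1)+(K-1)-M\leq \sum_{u\in V}\bar{d}_u-1$, producing an auxiliary node $v$ with $d(v)<\bar{d}_v$, and then argues by cases: if $v$ is fully covered, ``Add'' fires for the pair $(u,v)$; if not, ``Insert'' fires for \greedy{0}($v$,$i$) with $v\notin V_i$, via \Cref{a.root}. You dispense with the counting and the case split: for the under-covered node $w$ itself you choose $i$ with $w\notin V_i$ and run \greedy{0}($w$,$i$), where \Cref{a.root} guarantees root $i$ has an \child{i}, so the call must return true (one nit: if $d(i)<\bar{d}_i$ the procedure actually returns true through ``Add'' rather than ``Insert'', but either way it returns true). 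This is exactly the paper's second case applied directly to $w$ instead of to the auxiliary node $v$, so the paper's counting step and its ``Add'' case are logically superfluous; your route makes visible that \Cref{a.link} plays no role in this equivalence (it matters for capacity feasibility and for \Cref{lem.cover}, not here), which is a small but real sharpening. What the paper's longer route buys is expository rather than logical: its first case exhibits progress through ``Add'' between two ordinary nodes, the move that dominates the actual dynamics, whereas your argument funnels everything through the root. Finally, your explicit convention that root $i$'s server feed counts as its incoming $i$-link is not mere bookkeeping: the paper relies on it tacitly in both directions --- the root must pass the color test to serve as the ``Insert'' target, and when $M>K$ a root covered by exactly $K$ trees must fail the opening gate, else \greedy{0} would return true (and could even hand the root an incoming link from its own descendant, creating a cycle) --- so stating and justifying that convention strengthens, rather than departs from, the paper's argument.
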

\begin{proof}
If a node has $K$ incoming links with distinct labels, \greedy{0} returns false whichever target the node samples. So the if part follows.

Suppose there exists $u\in V, |\{i\in\mathcal{R}: u\in V_i\}| < K$. We prove the only if part by showing that \greedy{0} returns true when two specific nodes meet.  Node $u$ has fewer than $K$ incoming links. \Cref{a.init} implies that each node has at most $K$ incoming links. So
$|E|\leq K(N-1) + (K-1) -M\leq \sum_{u\in V}\bar{d}_u -1$ by \Cref{a.link}. Thus there exists node, say $v$, with $d(v)<\bar{d}_v$. a) if $v$ has $K$ incoming links, \greedy{0}($u$,$v$) returns true because ``Add'' can happen; b) if $v$ has fewer than $K$ incoming links, \greedy{0}($v$,$i$) returns true if $v\not\in V_i$ because ``Insert'' can happen.
\end{proof}

\begin{prop}\label{lem.cover}
Under \Cref{a.link,a.init,a.root,a.noCycle}, if \greedy{0}($u$,$v$) runs whenever $u$ samples $v$ for any $u,v\in V$, then  $|\{i\in\mathcal{R}: w\in V_i\}| = K$ for all $ w\in V$ in finite time.
\end{prop}
\begin{proof}
\Cref{a.link,a.init,a.root,a.noCycle} are valid over all time. Whenever \greedy{0} returns true, $|E|$ increases by one. But $|E|\leq KN - M$. \Cref{lem.cover} follows from \Cref{lem.static}.
\end{proof}

\greedy{0} can achieve coverage, but it has two main drawbacks: first, the depth of the tree $(V_i,E_i)$ for $i\in \mathcal{R}$ can be large; second, it cannot detect and eliminate cycles. In the next section we show how to solve these two problems by adding balance algorithms as complements.

\subsection{Achieve balance inside trees}
One way of decreasing the tree depth is to balance the tree. Here we provide a procedure under which trees can achieve balance and cycles can be eliminated. Suppose \single{0}($u$,$v$) runs whenever node $u$ samples node $v$ as a target.

\begin{algorithm}
\renewcommand{\thealgorithm}{}
\caption{\protect\single{0} (Node $u$, Node $v$)}
\label{alg.sd}

{\bf Output: } return true if $(V,E)$ changes, return false otherwise.

{\bf Tie broken: } arbitrarily

If there exists $i$ such that $u,v$ both have incoming $i$-links,
\begin{itemize}
\item{\bf Jump}:  if $d(v)<\bar{d}_v$ and $l_i^u + 1 < l_i^v$, remove $(u_{p},u)$, build $(v,u)$ in $E_i$, and return true;
\item{\bf LeafSwap}:  if $v$ is an \leaf{i} but $u$ is not, and $l_i^u > l_i^v$, remove  $(u_p,u),(v_p,v)$, build $(u_p,v),(v_p,u)$ in $E_i$, and return true.
\end{itemize}
Return false. (See  \Cref{f.single})
\end{algorithm}
\begin{figure}
\post{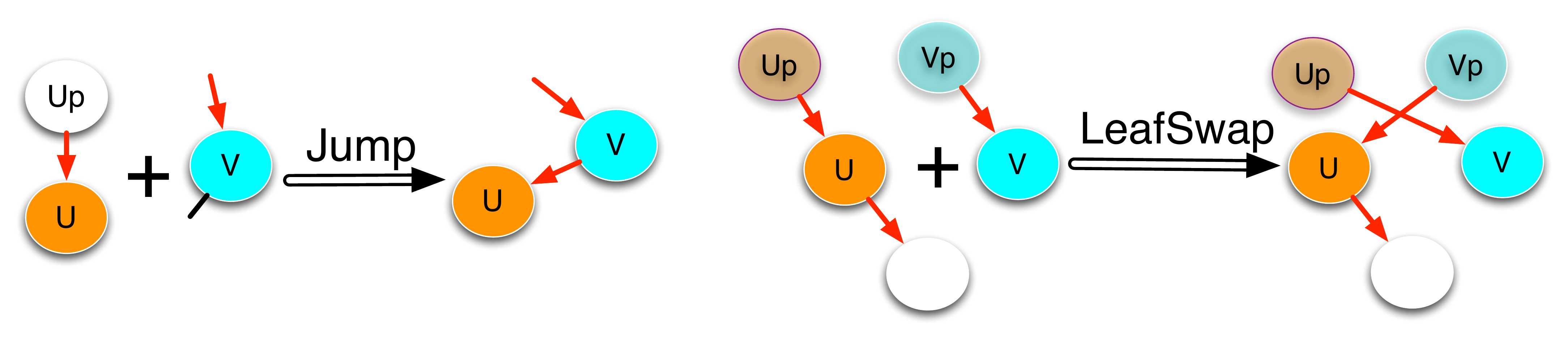}{7.5}
\caption{Single Tree Adjust}
\label{f.single}
\end{figure}
\single{p} applies the information of buffered depths, which is updated periodically but may have estimation errors. Analyzing \single{0} under depth updating is quite challenging. {\em To focus on properties of \single{0}, let us temporarily assume \Cref{a.freeError} holds. Discussion under the case wihout \Cref{a.freeError} will be covered by simulation in \Cref{sec.sim}.}

\begin{assumption}[\bf{Instantaneous distance update}] \label{a.freeError}
$\forall i\in\mathcal{R}, \forall u\in V$, assume $l_i^u =  L_i(u)$ over all time.
\end{assumption}
For any nodes $u,v\in V$, after \single{0}($u$,$v$) runs:
\begin{enumerate}
\item statements in \Cref{a.init,a.root,a.link,a.noCycle} remain valid if they are valid before running.
\item for any node, its number of incoming $i$-links and number of outgoing $i$-links both remain unchanged.
\item under \Cref{a.freeError}, for any node $w$ with $d_{i}(w)\geq1$, $L_i(w)$ is nonincreasing.
\item under \Cref{a.freeError}, no new cycles will be generated, and cycles in $(V\setminus V_i,E_{i})$ can be eliminated as shown in \Cref{f.elimCycle}, because $L_i(w)=+\infty$ for all $w\in V\setminus V_i$.
\end{enumerate}
\begin{figure}
\post{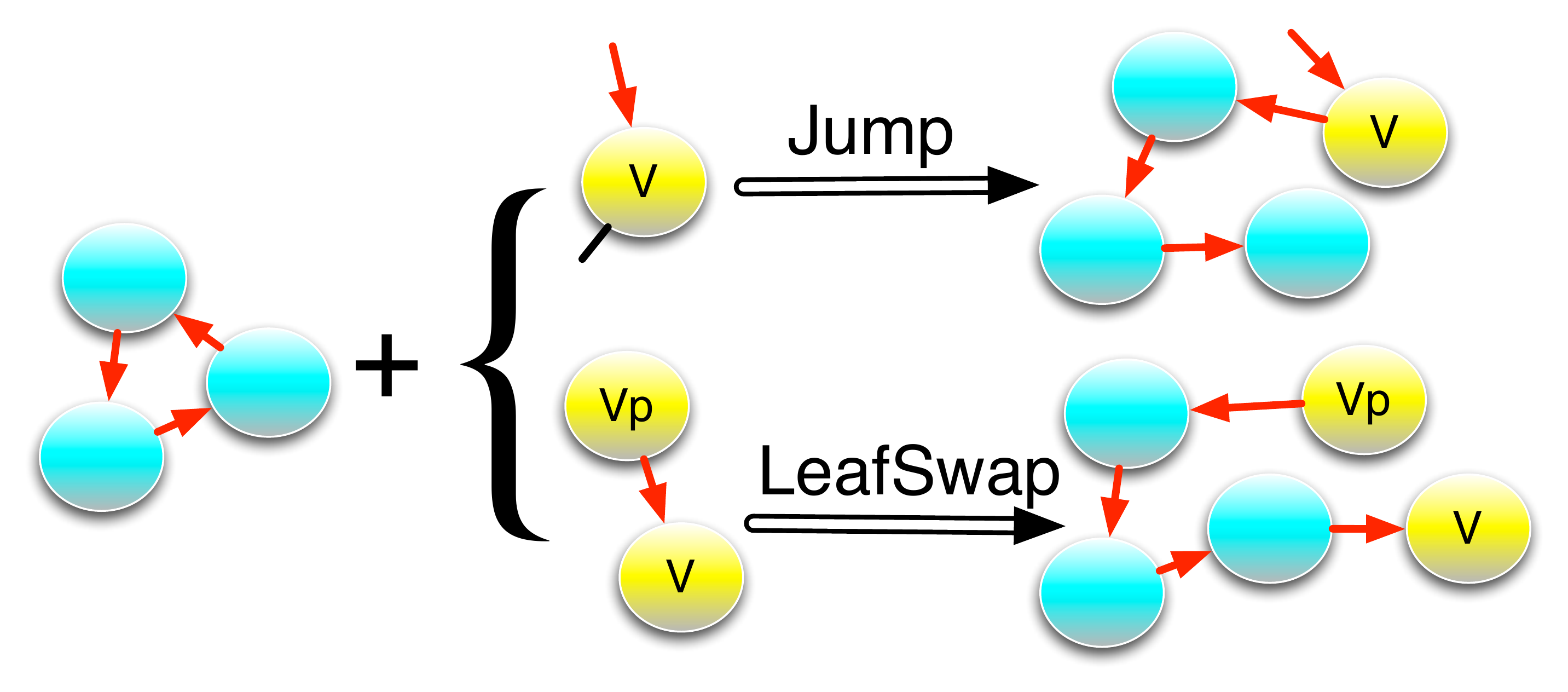}{5}
\caption{\protect\single{0} can eliminate original cycles.}
\label{f.elimCycle}
\end{figure}
The following lemma shows that \single{0} can return true unless all trees achieve balance.
\begin{lemma}\label{lem.singleBalance}
Under \Cref{a.init,a.freeError}, at time $0$, if \single{0}($u$,$v$) returns false for all pairs of nodes $u,v$, for each $i\in\mathcal{R}$, 
\begin{enumerate}
\item if \Cref{a.root} holds, \Cref{a.noCycle} also holds,
\item for any two \leafs{i} $u,v$, $|L_i(u)-L_i(v)|\leq 1$, and
\item each \internal{i} node $u$ is unavailable and $d_i(u)\geq1$.
\end{enumerate}
\end{lemma}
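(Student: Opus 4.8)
The plan is to translate the hypothesis ``\single{0}($u$,$v$) returns false for all pairs'' into two local-optimality conditions, one per move, and then refute each alleged violation of (1)--(3) by exhibiting a pair on which Jump or LeafSwap would fire. I first record the structural facts I will reuse. By \Cref{a.init}, $(V_i,E_i)$ is a directed tree, so every node other than root $i$ has a unique \parent{i} exactly one level above it; hence $L_i$ increases by one along each tree edge, and any node attaining the maximum depth $D:=\max_{w\in V_i}L_i(w)$ has no \child{i} and is therefore an \leaf{i}. Using \Cref{a.freeError} I replace buffered depths by true depths throughout, and I treat root $i$ as carrying an incoming $i$-link (its server link) so that it may legitimately be an endpoint of a move. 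In these terms, being stuck says: (no-Jump) for all $a,b\in V_i$, if $d(b)<\bar{d}_b$ then $L_i(a)\le L_i(b)+1$; and (no-LeafSwap) for every \leaf{i} $b$ and every non-\leaf{i} $a$, both carrying incoming $i$-links, $L_i(a)\le L_i(b)$.

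I would prove (2) first, since (3) reduces to it. Suppose two \leafs{i} $u,v$ satisfy $L_i(u)\ge L_i(v)+2$; then $L_i(u)\ge2$, so $u$ is not the root and has a unique \parent{i} $w$ with $L_i(w)=L_i(u)-1\ge L_i(v)+1>L_i(v)$. Since $u$ is a \child{i} of $w$, $w$ is not an \leaf{i}, so \single{0}($w$,$v$) would execute LeafSwap, contradicting the hypothesis; hence $|L_i(u)-L_i(v)|\le1$.

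For (3), fix an \internal{i} node $u$, so $L_i(u)\le D-2$, and let $z$ attain depth $D$, so $z$ is an \leaf{i}. If $u$ were also an \leaf{i}, then $u$ and $z$ would be two \leafs{i} differing by at least two in depth, contradicting (2); hence $d_i(u)\ge1$. If instead $u$ were available, then applying (no-Jump) with $b=u$ (which carries an incoming $i$-link, the server link when $u$ is the root) and $a=z$ gives $L_i(z)\le L_i(u)+1\le D-1$, contradicting $L_i(z)=D$; hence $u$ is unavailable.

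Finally, for (1) assume \Cref{a.root} and suppose a directed cycle $C\subseteq V\setminus V_i$ exists. Every node of $C$ has $L_i=+\infty$, an incoming $i$-link from its predecessor in $C$, and a \child{i} in $C$, so each $c\in C$ is a non-\leaf{i} carrying an incoming $i$-link. By \Cref{a.root}, $V_i\ne\{i\}$, so the deepest node $z\in V_i$ has $D\ge1$, is a non-root \leaf{i}, and carries an incoming $i$-link. Then \single{0}($c$,$z$) would execute LeafSwap (both endpoints carry incoming $i$-links, $z$ is an \leaf{i}, $c$ is not, and $L_i(c)=+\infty>L_i(z)$), contradicting the hypothesis; hence no such cycle exists and \Cref{a.noCycle} holds. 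The step I expect to require the most care is precisely this interaction between the shared precondition ``both endpoints carry incoming $i$-links'' and the out-of-tree structure: the argument must invoke a node of a genuine \emph{cycle}, where every node necessarily has an incoming link, rather than an arbitrary out-of-tree node with an incoming link whose \parent{i} could itself lack one and hence be immovable. This is also where \Cref{a.root} is essential, since it supplies a non-root \leaf{i} in $V_i$ to swap against; the analogous subtlety in (3) is the treatment of the root as a Jump endpoint via its server link.
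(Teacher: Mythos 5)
Your proof is correct and follows essentially the same strategy as the paper's: translate ``\single{0} returns false for all pairs'' into local optimality and, for each alleged violation of (1)--(3), exhibit a pair on which Jump or LeafSwap fires. Parts (2) and (3) coincide with the paper's argument almost line by line: for (2) the paper likewise passes to the \parent{i} of the deeper leaf and invokes LeafSwap against the shallower leaf, and for (3) it uses (2) to rule out leafhood and a Jump by the deepest \leaf{i} to rule out availability. The substantive difference is in part (1), and there your version is tighter than the paper's. The paper argues that if \Cref{a.noCycle} failed, then \single{0}($u$,$v$) would return true for \emph{any} $u\in V\setminus V_i$ having an incoming $i$-link and $v$ an \leaf{i}; as written this has a gap --- exactly the one you flagged as the delicate step --- because such a $u$ may itself be an \leaf{i} (the head of an out-of-tree chain whose tail has no incoming $i$-link), in which case LeafSwap is inapplicable and Jump need not apply since $v$ need not be available. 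Your restriction to a node of a directed cycle, which necessarily has both an incoming $i$-link and a \child{i}, together with the choice of the deepest (hence non-root) leaf as the swap partner, closes precisely that hole.

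One caveat about what (1) asserts. What you actually prove is that no directed cycle exists in $(V\setminus V_i,E_i)$, and you then conclude ``\Cref{a.noCycle} holds.'' Read literally (with the evident typo $d_i(u)=+\infty$ corrected to $L_i(u)=+\infty$), \Cref{a.noCycle} is strictly stronger: it forbids \emph{any} node of infinite depth from having an incoming $i$-link. That stronger statement does not follow from the hypotheses: a two-node chain $a\to b$ in $E_i$ outside $V_i$, with $a$ lacking an incoming $i$-link, $b$ childless, and every node of $V_i$ unavailable, is a fixed point of \single{0} that violates it (no LeafSwap fires because $b$ is an \leaf{i}, and no Jump fires because no node of $V_i$ is available and $b$'s own depth is infinite). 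So the cycle-freeness reading you adopt --- which is also the only reading under which the paper's own argument can be repaired, and the property the paper actually uses downstream to justify that \single{0} eliminates cycles --- is the tenable one; the overclaim lies in the lemma's statement, not in your argument. A similar interpretive restriction is implicit in your (and the paper's) part (2): the leaves there must be leaves of the tree $(V_i,E_i)$, i.e., possess \parent{i}s, since otherwise isolated nodes are \leafs{i} of infinite depth and the claim is vacuously false.
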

\begin{proof}
1) \Cref{a.root} tells that for each $i$ there is at least one \leaf{i}. Thus \Cref{a.noCycle} is valid, otherwise \single{0}($u$,$v$) returns true if $u\in V\setminus V_i$ has incoming $i$-links  and $v$ is an \leaf{i}.

2) Assume node $u$ and $v_c$ are both \leafs{i} and $L_i(u) +1 < L_i(v_c)$. Assume $v$ is the \parent{i} of $v_c$, then $L_i(u) < L_i(v)$. \single{0}($u$,$v$) returns true. Thus, depths  of any two \leafs{i} differ by at most one. 

3) If $u$ is an \internal{i} node, it cannot be an \leaf{i} because of 2), and it cannot be available otherwise \single{0}($v$,$u$) returns true where $v$ is the \leaf{i} with the largest depth to $i$.
\end{proof}

\begin{figure}
\post{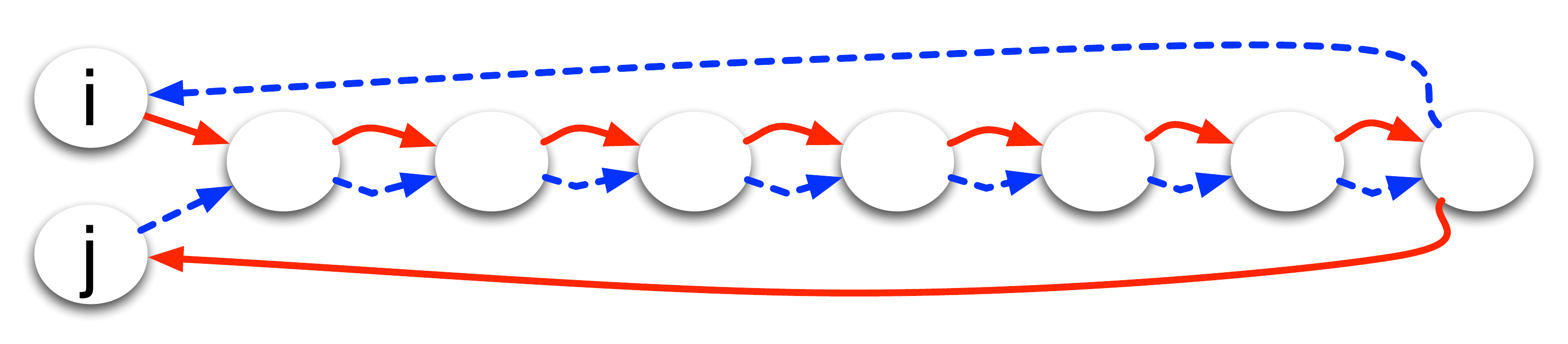}{7}
\caption{Mixed Nodes form a chain which may be very long.}
\label{f.chain}
\end{figure}
Intuitively running \greedy{0} and \single{0} together can achieve both coverage and balance, as well as eliminate cycles.  However, balance in a tree does not garantee the tree has small depth. As shown in \Cref{f.chain}, if there are many nodes with a single outgoing $i$-link, chains may appear and thereby the depth of tree $(V_i,E_i)$ can be large. Fortunately, there exists ways to eliminate conditions like that, as shown in the following.

\subsection{Achieve balance among trees}

Nodes with a single child play an important role in increasing the depth of a tree. An unavailable node $u$ with a single \child{i} for certain $i\in\mathcal{R}$ either have $\bar{d}_u = 1$ or is a mixed node. The case that $\bar{d}_u = 1$ is less interesting because most nodes can be required to provide at least two outdegrees, especially when the streaming rate of a substream is small comparing to the upload capacity. Suppose \Cref{a.2} holds for simplicity. 
\begin{assumption}[\bf{Diversity Degree}] \label{a.2}
$ \bar{d}_u \neq 1$ for all $ u\in V\setminus \mathcal{R}$.
\end{assumption}
Notice that in \Cref{a.2} existence of nodes with outdegree zero is allowed. \Cref{a.2} gets rid of the case that many nodes have outdegree one.
Reducing the number of mixed nodes can lower the tree depths. In this section we provide \mixnode{p} under which the number of mixed nodes can be greatly reduced. For any pair of nodes $u_c,v$, suppose \mixnode{0}($u_c$,$v$) runs when $u_c$ samples $v$ as a target.
\begin{algorithm}
\renewcommand{\thealgorithm}{}
\caption{\protect\mixnode{0} (Node $u_c$, Node $v$)}
\label{alg.mix}

{\bf Output: } return true if $(V,E)$ changes, return false otherwise.

{\bf Tie broken: } arbitrarily.

If there exist $i,j,i\neq j$ such that $v$ has an \child{j} say $v_c$, $u_c$ has an \parent{i} say $u$, and 

~ {\bf MixSwap}: if $l_i^u \geq l_i^v, l_j^u \leq l_j^v$ and either of the two is true:
\begin{itemize}
\item[a)] $l_i^u\neq l_v^i$ or $l_j^u\neq l_j^v$,
\item[b)]  $l_i^u= l_i^v$, $l_j^u = l_j^v$, $(u-v)(j-i) > 0$, (Note $u,v,i,j$ are ids in \{1,...N\}.)
 \end{itemize}
then remove $(u,u_{c}),(v,v_{c})$,  build $(u,v_{c})$ in $E_j$, build $(v,u_{c})$ in $E_i$, and return true.

Otherwise return false. (See \Cref{f.mixadjust})
\end{algorithm}
\begin{figure}
\post{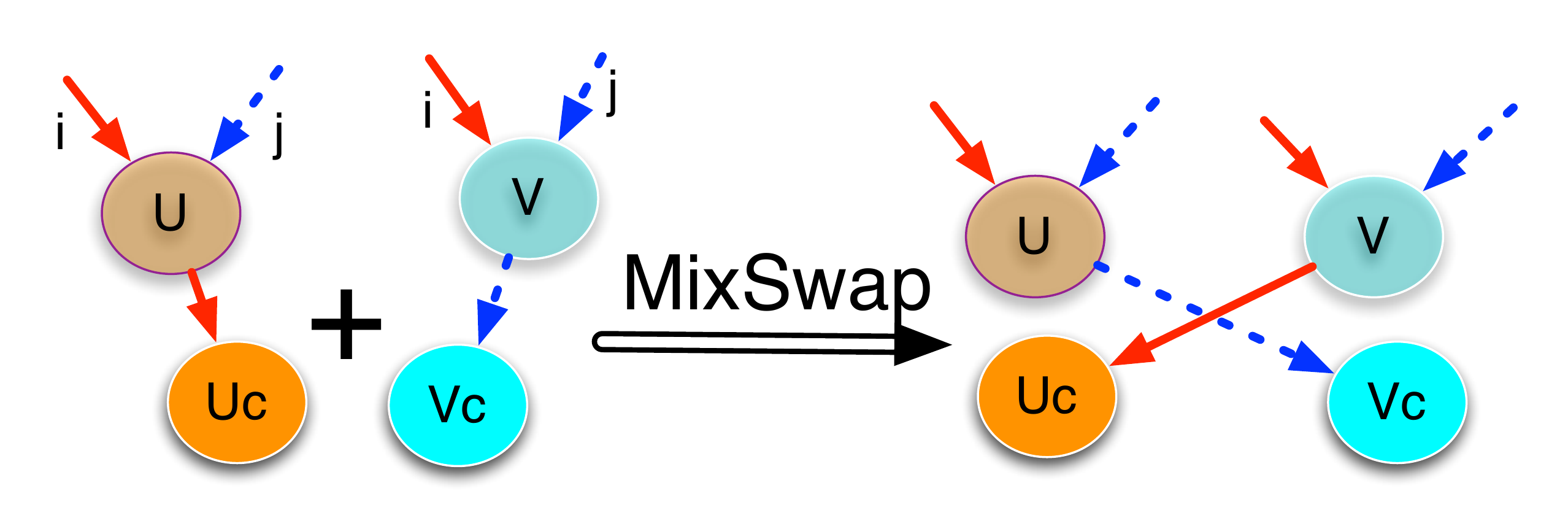}{5.5}
\caption{$u_c$ and $v_c$ switch their parents if one can decrease its depth while the other one's depth does not increase, or depths are unchanged but lower-id parents can get lower-color links.}
\label{f.mixadjust}
\end{figure}

\mixnode{0}($u_c$,$v$) returns true if after ``MixSwap'', between $u_c$ and one child of $v$, one can decrease its depth while the other one's depth does not increase, or depths are unchanged but parents with lower ids can get outgoing links with lower colors. We break the tie by assuming parents with lower ids have a preference on links with lower colors, so as to eliminate the case that there are many mixed nodes with exactly the same depths in multiple trees.

Under \Cref{a.freeError}, for any pair of nodes $u,v$, after \mixnode{0}($u$,$v$) runs,
\begin{enumerate}
\item statements in \Cref{a.init,a.root,a.link,a.noCycle} remain valid if they are valid before running.
\item $L_i(w)$ is nonincreasing for any node $w$,
\item no new cycle is generated.
\end{enumerate}
Moreover, the following lemma indicates that \mixnode{0} can return true unless depth vectors of mixed nodes form a strict chain:
\begin{lemma}\label{lem.mix}
Under \Cref{a.freeError}, if \mixnode{0}($\bar{u}$,$\bar{v}$) returns false for all pairs of nodes $\bar{u},\bar{v}\in V$, for any $i,j\in\mathcal{R}$ and any two \mix{i}{j}s $u,v$, either $(L_i(u),  L_j(u)) < (L_i(v), L_j(v))$ or $ (L_i(v), L_j(v))< (L_i(u),  L_j(u)) $.
\end{lemma}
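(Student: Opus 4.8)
The plan is to prove the contrapositive: fix colors $i,j$ and two distinct \mix{i}{j}s, which I relabel $x,y$ to avoid clashing with the local variables of \mixnode{0}, suppose their depth vectors $(L_i(x),L_j(x))$ and $(L_i(y),L_j(y))$ are \emph{not} strictly comparable, and exhibit one ordered pair of nodes on which \mixnode{0} returns true. Because $x$ and $y$ are each \mix{i}{j}, both own an $i$-child and a $j$-child, so all four ``symmetric'' MixSwap invocations are structurally available: take $x$ or $y$ as the procedure's parent node $u$, and let the procedure's parent-link color and child-link color be the ordered pair $(i,j)$ or $(j,i)$. Writing $a_i=L_i(x),a_j=L_j(x),b_i=L_i(y),b_j=L_j(y)$ and using \Cref{a.freeError} to replace every buffered $l$ by the true $L$, I would first record that these four invocations collapse onto just two main inequalities: two of them require $a_i\ge b_i$ and $a_j\le b_j$, and the other two require $a_i\le b_i$ and $a_j\ge b_j$.

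Next I would run the exhaustive case analysis on the sign pattern of $(a_i-b_i,\,a_j-b_j)$, extending comparisons to $+\infty$ so that $\infty\ge\infty$, $\infty\le\infty$, and $\infty$ exceeds every finite value. Strict comparability is exactly the two patterns where both coordinates share the same strict sign; in each of the remaining seven patterns at least one of the two main inequalities holds, so the structural precondition together with the main inequality of MixSwap is satisfied by a suitable invocation. Whenever the two depth vectors are unequal, clause~(a) holds automatically, and that invocation returns true, which settles every non-comparable pattern except the one where the vectors coincide.

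The main obstacle is precisely the degenerate case of equal depth vectors: clause~(a) now fails and I must fall back on the tie-break clause~(b), which asks for $(u-v)(j-i)>0$ in terms of the procedure's parent ids and colors. The key observation is that the four symmetric invocations realize all four sign combinations of $(u-v)$ against (child color minus parent color), so exactly the invocations that point the lower-id parent at the lower-color link satisfy~(b); concretely, assuming without loss of generality $x<y$ and $i<j$, taking $y$ as the procedure's parent node with parent-color $i$ and child-color $j$ (equivalently $x$ as parent node with colors $(j,i)$) gives $(u-v)(j-i)=(y-x)(j-i)>0$ while both main inequalities hold with equality, so MixSwap fires. Hence every non-comparable configuration produces an ordered pair on which \mixnode{0} returns true, contradicting the hypothesis and proving the claim. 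The only loose ends are a few structural degeneracies, such as the $i$-child of one node coinciding with the $j$-child of the other; since each such swap still changes $(V,E)$, it likewise returns true and does not weaken the argument.
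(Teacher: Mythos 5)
Your proof is correct and takes essentially the same route as the paper's: the paper's one-sentence argument is precisely the contrapositive observation that if the two depth vectors are not strictly ordered, then ``MixSwap'' can fire when a child of one mixed node contacts the other mixed node. Your write-up simply fills in the details the paper leaves implicit --- the sign-pattern case analysis, the collapse of the four symmetric invocations onto two inequality systems, and the explicit treatment of the id/color tie-break clause (b) for the equal-vector case.
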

\begin{proof}
The lemma follows by noticing that if $u,v$ are both \mix{i}{j}s, ``MixSwap'' can happen when either a child of $u$ contacts $v$ or a child of $v$ contacts $u$, unless  $(L_i(u),  L_j(u))$ and $(L_i(v), L_j(v))$ are in a strict order.
\end{proof}

\Cref{lem.mix} still does not guarantee that the number of mixed nodes is small. As shown in \Cref{f.chain}, mixed nodes can form a long chain where the conclusion in \Cref{lem.mix}  still holds. However, the appearances of the structure in \Cref{f.chain} are quite rare because random sampling is assumed. And intuitively it becomes rarer as $N$ increases.
In practice we may safely ignore it, but here for completeness of analysis, we eliminate the possibility of a long chain as in \Cref{f.chain} by assuming \Cref{a.hat} holds. We will show later by simulation that ignoring \Cref{a.hat} does not harm performance.

\begin{assumption}[\bf{Shower head}]\label{a.hat}
At time $0$, there exists $c\in \mathcal{Z}^+$ such that for each $ i\in \mathcal{R}$, there are at least $M$ \leafs{i} in $(\hat{V}_i, E_i)$, where $\hat{V}_i: = \{ u\in V : d_i(u)\leq c\}$.
\end{assumption}
\Cref{a.hat} says that initially in any tree the subtree of nodes with depth bounded by $c$ has at least $M$ leaves.
Intuitively \Cref{a.hat} suggests there is something analogous to a shower head, which provides enough branches near the top of each tree $(V_i,E_i)$. The value $c$ in \Cref{a.hat} can be as small as $O(\log M)$, or even $O(1)$ if the root or its children have large outdegrees.
We have \Cref{lem.balance} under \Cref{a.hat}.

\begin{lemma}\label{lem.balance}
Under \Cref{a.init,a.freeError,a.2,a.hat}, if \single{0}($u$,$v$) and \mixnode{0}($u$,$v$) both return false for all $u,v\in V$, then for each $i\in\mathcal{R}$ the depth of each tree $(V_i,E_i)$ is less than or equal to $\log_2(N+1) + c$, where $c$ is defined in \Cref{a.hat}.
\end{lemma}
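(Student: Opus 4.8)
The plan is to fix a color $i\in\mathcal R$, write $D$ for the depth of $(V_i,E_i)$ and $n_k=|\{w\in V_i:L_i(w)=k\}|$ for the number of nodes at depth $k$, and to show that the level sizes $n_k$ essentially double as $k$ grows, so that fitting all the nodes forces $D\le\log_2(N+1)+c$. Since $(V_i,E_i)$ is a tree by \Cref{a.init}, every node at depth $k+1$ has a unique parent at depth $k$, so $n_{k+1}=\sum_{w:\,L_i(w)=k}d_i(w)$. The whole argument is a counting estimate on this identity: the only obstruction to doubling is an internal node with a single $i$-child, and the crux is to bound the number of such nodes per level using \Cref{lem.mix}.

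First I would extract the static structure guaranteed by convergence of \single{0}. By \Cref{lem.singleBalance} (parts 2 and 3, which need only \Cref{a.init,a.freeError}), all \leafs{i} lie at depth $D$ or $D-1$, and every \internal{i} node, i.e.\ every node at depth at most $D-2$, is unavailable and has $d_i\ge 1$; thus on these internal levels each node has at least one $i$-child, so $n_{k+1}\ge n_k$. Next I would classify an internal node $u$ with exactly one $i$-child: being unavailable it has $d(u)=\bar{d}_u$, and being a non-root it has $\bar{d}_u\neq1$ by \Cref{a.2}, so $\bar{d}_u\ge 2$ (since $\bar{d}_u\ge d_i(u)=1$); hence $u$ carries an outgoing link of some color $j\neq i$ and is a \mix{i}{j}. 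Now apply \Cref{lem.mix}: for a fixed $j$ the vectors $(L_i,L_j)$ of the \mix{i}{j}s are strictly ordered, so no two of them share the same value of $L_i$. Therefore at most one single-child internal node at a given depth $k$ can be a \mix{i}{j} for a fixed $j$, and with at most $M-1$ choices of $j$ there are at most $M-1$ single-child internal nodes per level. (The only nodes escaping this classification are the at most $M-1$ roots other than $i$, for which \Cref{a.2} does not apply; these contribute only a bounded additive correction, and I would either fold them into the $M-1$ count or treat them separately.)

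With at most $M-1$ single-child nodes per internal level, every other node at that level has $d_i\ge 2$, giving $n_{k+1}\ge 2n_k-(M-1)$ for $0\le k\le D-2$. Setting $a_k=n_k-(M-1)$ turns this into the clean doubling $a_{k+1}\ge 2a_k$. To start the recursion high up I would use \Cref{a.hat}: when $c<D-1$ the top subtree $\{L_i\le c\}$ contains no genuine \leafs{i} (those sit at depth at least $D-1$), so all of its at least $M$ leaves lie at depth exactly $c$, whence $n_c\ge M$ and $a_c\ge 1$. Iterating gives $n_k\ge a_k\ge 2^{\,k-c}$ for $c\le k\le D-1$, and summing the geometric series together with $n_D\ge 1$ yields $N\ge\sum_{k=c}^{D}n_k\ge(2^{\,D-c}-1)+1=2^{\,D-c}$, i.e.\ $D\le c+\log_2 N\le c+\log_2(N+1)$. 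The remaining cases $D\le c+1$ are immediate because $\log_2(N+1)\ge 1$.

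The main obstacle is the per-level bound on single-child internal nodes: making precise that \Cref{lem.mix} forces distinct $L_i$-values for the \mix{i}{j}s of each fixed partner color, and combining the at most $M-1$ colors into one clean estimate, is where essentially all the structure is consumed. A secondary delicate point is the base case: I must extract the \emph{single-level} bound $n_c\ge M$ from \Cref{a.hat} (by pinning the top-subtree leaves to depth exactly $c$), rather than only a cumulative count $\sum_{k\le c}n_k\ge M$, which would be too weak to seed the doubling; here I would apply \Cref{a.hat} to the fixed-point configuration, using the monotonicity of $L_i$ under \single{0} and \mixnode{0} to justify that the relevant top-subtree structure persists. Handling the exempt root nodes rigorously and the boundary cases $c\ge D-1$ then rounds out the proof.
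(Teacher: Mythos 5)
Your proposal follows the paper's proof essentially step for step: \Cref{lem.singleBalance} to make every node at depth $\le D-2$ an unavailable node with $d_i\ge 1$, \Cref{a.2} to force non-mixed such nodes to have at least two $i$-children, \Cref{lem.mix} to bound the single-child (mixed) nodes by $M-1$ per level, \Cref{a.hat} to seed the recursion with $M$ nodes at depth exactly $c$, and then the doubling estimate $n_{k+1}\ge 2n_k-(M-1)$ summed geometrically. In fact you are more careful than the paper on two points the paper states without justification: pinning the shower-head leaves to depth exactly $c$ (the paper just asserts ``there are at least $M$ nodes whose depth to $i$ is $c$''), and the boundary cases $D\le c+1$.

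The one substantive loose end is the roots, and here your two proposed fixes do not actually work as stated. Treating the roots as a ``bounded additive correction'' is incompatible with the bound being proved: if a non-mixed root with a single $i$-child can sit on a level, the recursion degrades to $a_{k+1}\ge 2a_k-r_k$ with $\sum_k r_k\le M-1$, which only yields $D\le\log_2(N+1)+c+O(M)$. And folding them into the $M-1$ count is not justified by what you wrote, since in principle a level could contain both a non-mixed root $j$ and a separate \mix{i}{j} (with $L_j=\infty$), giving up to $2(M-1)$ single-child nodes on that level. The clean resolution --- which the paper's own proof also silently skips, and which strictly requires adding \Cref{a.root} to the hypotheses --- is that a root $j\neq i$ appearing in tree $i$ with an $i$-child is never ``exempt'': \Cref{a.root} is preserved by all the procedures (under \Cref{a.freeError}, Jump cannot steal a root's depth-one child, LeafSwap preserves parents' color counts, and MixSwap cannot remove root $j$'s last $j$-link because its condition would force $l_j^u\le l_j^v=0$), so root $j$ always retains a \child{j}; hence if it also has an \child{i} it is itself a \mix{i}{j} and is already counted, once per level, by \Cref{lem.mix}. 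With that observation your per-level bound of $M-1$ holds including roots, and the rest of your argument goes through exactly as the paper's does.
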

\begin{proof}
Suppose the depth of tree $(V_i,E_i)$ is $\bar{l}_i$. \Cref{lem.mix} tells that for each  $ j\neq i$ and for each $k\leq \bar{l}_i$, there are at most one \mix{i}{j}s whose depth to $i$ is $k$. So there are at most $M-1$ mixed nodes which have \child{i} and whose depth to $i$ is $k$.  
\Cref{lem.singleBalance} tells that all \internal{i} nodes must have at least $2$ children because they are unavailable and because \Cref{a.2} holds.

Thus, for each $k\leq \bar{l}_i-2$, in the set of nodes whose depth to $i$ is $k$, at most $M-1$ of them can have a single \child{i}, while each of the other nodes has at least $2$ $i$-children because they are unavailable non-mixed \internal{i} nodes. 

Notice that there are at least $M$ nodes whose depth to $i$ is $c$.   Thus, the number of nodes whose depth to $i$ is in $[c, \bar{l}_i -1]$ is at least $1 + 2 + 4 + ...+ 2^{\bar{l}_i-1- c}\leq N$, so $\bar{l}_i\leq \log_2(N+1) + c$.
\end{proof}

\subsection{Combine everything together\label{sec_whole}}

Here we combine all parts above together.
For each pair of nodes $u,v$, run \algwhole{0}($u$,$v$) when $u$ samples $v$.

\begin{algorithm}
\renewcommand{\thealgorithm}{}
\caption{\protect\algwhole{0} (Node $u$, Node $v$)}
\label{alg.whole}

{\bf Output: } return true if $(V,E)$ changes, return false otherwise.

{\bf Tie broken: } arbitrarily

\begin{algorithmic}  
\STATE Nodes $u,v$ update their buffered depth by running \depUpdate{0} for each $i\in\mathcal{R}$ respectively
    \RETURN \greedy{0}($u$, $v$) \OR \\
    ~~ \single{0}($u$, $v$) \OR
    \mixnode{0}($u$, $v$)
\end{algorithmic}
\end{algorithm}
Just like that in C or C++, in \algwhole{0},  if operation ``A'' returns true, operation ``A or B'' immediately returns and operation ``B'' does not run.

Notice that buffered depths are also updated whenever new links are built, as assumed in \Cref{sec.dDist}. And
Notice that under \Cref{a.freeError}, for any pair of nodes $u,v$, after \algwhole{0}($u$,$v$) runs,  statements in \Cref{a.init,a.link,a.root,a.noCycle,a.2,a.hat} remain valid if they hold before running, respectively.

\Cref{lem.balanceconv} shows that by running \algwhole{0} certain metric changes monotonely. Some additional definitions are necessary before defining the metric.
Define $L'_i(u) = \min\{L_i(u), N\}$, i.e., $L'_i(u)$ is the same as $L_i(u)$ except that $L'_i(u)=N$ instead of $+\infty$ if there is no path from $i$ to $u$.
Let $Y= \sum_{u\in V}\sum_{i\in\mathcal{R}} L'_{i}(u)$ so $Y$ is the sum of all modified depths of all nodes.
Let $D(u):=\sum_{i\in\mathcal{R}} i \cdot d_{i}(u)$ be the sum of the colors of all outgoing links of $u$, and let $S: = \sum_{u} u D(u)$.
\begin{lemma}\label{lem.balanceconv}
Under \Cref{a.freeError}, for any pair of nodes $u,v$, after \algwhole{0}($u$,$v$) runs, if it returns true, $(-|E|, Y, -S)$ decrease lexicographically by at least one, otherwise $(-|E|, Y, -S)$ does not change.
\end{lemma}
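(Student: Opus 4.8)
The plan is to verify the claim separately for each of the three constituent procedures that \algwhole{0} can invoke, using the fact that the procedures are tried in order and the first one to return true short-circuits the rest. So for any sampling event exactly one of the following holds: either \greedy{0} fires, or \greedy{0} returns false and \single{0} fires, or both return false and \mixnode{0} fires, or all three return false and nothing changes. In the last case $(V,E)$ is unchanged, so $|E|$, $Y$, and $S$ are all unchanged and the tuple $(-|E|,Y,-S)$ does not move, which settles the ``otherwise'' part immediately.

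\medskip

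\noindent\textbf{The three ``true'' cases.} First I would handle \greedy{0}. When it returns true, a link is added (either via ``Add'' or ``Insert''), so $|E|$ strictly increases by one; hence $-|E|$ strictly decreases by one, and the lexicographic tuple decreases in its \emph{first} coordinate regardless of what happens to $Y$ and $S$. This is the easy case since the first coordinate dominates the lexicographic order. Second, for \single{0} returning true: neither ``Jump'' nor ``LeafSwap'' changes the number of links, so $|E|$ (hence $-|E|$) is unchanged, and I must show $Y$ strictly decreases. Here I would lean on the already-stated property that under \Cref{a.freeError} each $L_i(w)$ is nonincreasing after \single{0} runs, together with $L'_i = \min\{L_i, N\}$ being a monotone nondecreasing function of $L_i$; so every modified depth is nonincreasing and thus $Y$ does not increase. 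To get a \emph{strict} decrease of at least one, I would argue that when \single{0} genuinely changes the topology, at least one node's depth to some root strictly drops by a positive integer amount: in ``Jump'' the moved node $u$ satisfies $l_i^u + 1 < l_i^v$ after reattaching, giving a strict depth reduction, and in ``LeafSwap'' the condition $l_i^u > l_i^v$ forces the deeper leaf to a strictly shallower position. Since depths are integers (or $+\infty$ capped at $N$), any genuine reduction is by at least one, so $Y$ drops by at least one.

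\medskip

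\noindent\textbf{The \mixnode{0} case.} Third, when \mixnode{0} returns true via ``MixSwap'', again no link is created or destroyed, so $-|E|$ is unchanged. I would then split on the two sub-conditions (a) and (b). In case (a), the hypotheses $l_i^u \ge l_i^v$, $l_j^u \le l_j^v$ with at least one inequality strict mean the swap of the two children's parents strictly reduces at least one depth while not increasing the other; invoking again the stated ``$L_i$ nonincreasing'' property, $Y$ strictly decreases by at least one, so the tuple drops in its \emph{second} coordinate. The more delicate case is (b), where $l_i^u = l_i^v$ and $l_j^u = l_j^v$: here the swap leaves every depth unchanged, so $Y$ is unchanged, and the decrease must come entirely from the \emph{third} coordinate $-S$. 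I expect this to be the main obstacle. I would unwind $S = \sum_u u\,D(u)$ with $D(u) = \sum_i i\,d_i(u)$ and compute the net change caused by moving the $i$-link off $u$ onto $v$ and the $j$-link off $v$ onto $u$. The tie-breaking condition $(u-v)(j-i) > 0$ is precisely engineered so that this reshuffle of colors among the two parents strictly increases $S$ (equivalently strictly decreases $-S$) by a computable positive integer; the plan is to evaluate the difference $\Delta S$ explicitly as a product of the two sign-matched factors and confirm it has the correct sign, establishing the strict decrease of $-S$ by at least one. Collecting the three true-cases and the false-case then yields the lemma.
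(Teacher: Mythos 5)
Your overall structure --- the short-circuit case analysis, with \greedy{0} moving the first coordinate, \single{0} and case (a) of \mixnode{0} moving the second, and case (b) of \mixnode{0} moving the third via the explicit computation $\Delta S=(u-v)(j-i)\geq 1$ --- is exactly the paper's argument (which the paper states in a single terse sentence), and your treatments of ``Add''/``Insert'', ``Jump'', and both MixSwap sub-cases are sound. (Minor point: the ``Jump'' condition should be read as $l_i^v+1<l_i^u$, i.e., $u$ reattaches below a strictly shallower available node $v$; you copied the paper's reversed inequality but drew the correct conclusion that $u$'s depth strictly drops.)

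There is, however, a genuine gap in your ``LeafSwap'' case. You assert that under \Cref{a.freeError} ``every modified depth is nonincreasing,'' and that the strict drop in $Y$ comes from ``the deeper leaf'' moving to a shallower position. Both claims fail for LeafSwap: there $v$ is the \leaf{i} and it is the \emph{shallower} node ($l_i^v<l_i^u$), and after the swap $v$'s depth strictly \emph{increases} from $l_i^v$ to $l_i^u$ --- this is precisely why the paper's monotonicity property for \single{0} is stated only ``for any node $w$ with $d_i(w)\geq 1$,'' a qualifier your argument drops. Moreover, the decrease of $u$'s own depth, $l_i^u-l_i^v$, is exactly cancelled by $v$'s increase, so counting only $u$ and $v$ gives $\Delta Y=0$ and your argument stalls. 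The strict decrease comes from the nodes you did not count: since $u$ is not an \leaf{i}, it has at least one \child{i}, and every proper descendant of $u$ in tree $i$ moves up by $l_i^u-l_i^v\geq 1$ together with $u$. Hence $Y$ decreases by $(l_i^u-l_i^v)$ times the number of proper $i$-descendants of $u$, which is at least one. With that correction, your proof goes through and coincides with the paper's.
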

\begin{proof}
Under \Cref{a.freeError}, if \greedy{0} returns true, $|E|$ increases by one; if \single{0} returns true, or \mixnode{0} returns true because of Condition (a), $Y$ decreases by at least one but $|E|$ remains unchanged; if \mixnode{0} returns true because of Condition (b), $S$ increases by at least one while $|E|,Y$ remain unchanged.
\end{proof}
\Cref{lem.balanceconv} helps to show  convergence of the algorithm.

\begin{prop}\label{lem.conv}
Under \Cref{a.init,a.link,a.root,a.freeError,a.2,a.hat}, suppose \algwhole{0}($u$,$v$) runs whenever $u$ samples $v$ for any $u,v\in V$. Then in finite time \algwhole{0}($u$,$v$) returns false for all $u,v$, and at that time,
\begin{enumerate}
\item[(a)] $\forall w\in V, |\{i\in\mathcal{R}: w\in V_i\}| = K$,
\item[(b)] $\forall i\in\mathcal{R},$ the depth of the tree $(V_i,E_i)$ is bounded by $\log_2(N+1) + c$, where $c$ is the value in \Cref{a.hat}.
\end{enumerate}
\end{prop}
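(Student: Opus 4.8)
The plan is to combine the monotone potential of \Cref{lem.balanceconv} with the structural characterizations of the terminal configuration provided by \Cref{lem.static,lem.singleBalance,lem.balance}. The argument splits cleanly into a termination step, which shows that a global fixed point is reached in finite time, and a decoding step, which reads off properties (a) and (b) from the fact that every subprocedure returns false at that fixed point.

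First I would establish termination. By \Cref{lem.balanceconv}, under \Cref{a.freeError} the lexicographic triple $(-|E|, Y, -S)$ is nonincreasing along the dynamics and strictly decreases (by at least one in lexicographic order) every time \algwhole{0} returns true; when \algwhole{0} returns false, and likewise when only depth updates occur (which are vacuous under \Cref{a.freeError}, since $l_i^u = L_i(u)$ always), the triple is unchanged. Each coordinate is integer-valued and confined to a finite range: $|E|\le \sum_{u}\bar d_u$, each $L'_i(u)\in[0,N]$ so $Y\in[0,MN^2]$, and $S$ is bounded because $d_i(u)\le\bar d_u$. Hence the triple ranges over a finite totally ordered set, so along any realization of the sampling there can be only finitely many updates that return true. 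Let $T$ be the time of the last such update (with $T=0$ if there is none); after $T$ the pair $(V,E)$, and therefore under \Cref{a.freeError} the entire state, is frozen. If some pair $(u',v')$ had \algwhole{0}$(u',v')$ returning true on this frozen configuration, then since every ordered pair is sampled at arbitrarily late times almost surely, that pair would be sampled after $T$ and trigger a true update, contradicting the maximality of $T$. Thus in finite time \algwhole{0}$(u,v)$ returns false for all $u,v$.

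Next I would decode what ``false for all pairs'' means. Because \algwhole{0} returns the short-circuited disjunction of \greedy{0}, \single{0}, and \mixnode{0} (run after the depth update, which is a no-op here), a false output forces each of the three to return false on the frozen configuration, for every pair. Throughout the run \Cref{a.init,a.link,a.root,a.2,a.hat} are preserved, as noted for \algwhole{0}, so they hold at the terminal configuration. Since \single{0} returns false for all pairs and \Cref{a.root} holds, the first conclusion of \Cref{lem.singleBalance} yields \Cref{a.noCycle} at the terminal configuration. Now \Cref{a.init,a.link,a.root,a.noCycle} all hold there and \greedy{0} returns false for all pairs, so the static characterization of \Cref{lem.static}, whose proof uses only the configuration and not that it is time $0$, gives $|\{i\in\mathcal R: w\in V_i\}| = K$ for every $w$, which is (a). For (b), both \single{0} and \mixnode{0} return false for all pairs while \Cref{a.init,a.freeError,a.2,a.hat} hold, so \Cref{lem.balance} directly bounds the depth of each $(V_i,E_i)$ by $\log_2(N+1)+c$.

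I expect the termination step to be the main obstacle: one must argue not merely that the finite-valued monotone potential stops changing, but that the resulting frozen configuration is a genuine global fixed point, i.e.\ that no unsampled pair could still return true, and this is exactly where the almost-sure recurrence of every ordered pair is essential. A secondary subtlety to handle with care is that \Cref{a.noCycle} is \emph{not} among the hypotheses of the proposition at time $0$, so it must be recovered at the terminal configuration via \Cref{lem.singleBalance} before \Cref{lem.static} can be invoked; the remaining assumptions transfer to the terminal configuration only through the preservation property of \algwhole{0}, which I would cite explicitly.
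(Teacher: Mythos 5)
Your proposal is correct and follows essentially the same route as the paper's proof: termination via the bounded integer-valued lexicographic potential of \Cref{lem.balanceconv}, then reading off (a) from \Cref{lem.static} and (b) from \Cref{lem.balance}. You are in fact more careful than the paper's own three-line argument, which cites \Cref{lem.static} directly even though \Cref{a.noCycle} is not a hypothesis of the proposition; your recovery of \Cref{a.noCycle} at the terminal configuration via part 1 of \Cref{lem.singleBalance}, and your explicit use of almost-sure resampling of every pair to conclude the frozen configuration is a genuine fixed point, fill real gaps that the paper leaves implicit.
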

\begin{proof}
Notice that statements in \Cref{a.init,a.link,a.root,a.2,a.hat} are valid over all time. \Cref{lem.balanceconv} and the boundness of $|E|, Y$ and $S$ tell that in finite time \algwhole{0} will return false whenever any two nodes meet. \Cref{lem.static} tells that (a) is valid while \Cref{lem.balance} tells that (b) is valid.
\end{proof}

\subsection{Comment on distributed depth update}
Under \Cref{a.freeError}, no new cycles can appear by running \algwhole{0}, which is not the case when depths are updated distributedly. Here we argue that cycles are rare and can be eliminated quickly even if \Cref{a.freeError} does not hold.

First, a new cycle appears only if a node builds an incoming link from one of its descendants, which happens only if the descendant has a smaller buffered depth. That condition is rare because 1) most nodes just have several descendants; 2) if $u$ is a descendant of $v$ and if the depth of $u$ is larger than the depth of $v$, by running \single{0} and \mixnode{0} the depth of $u$ can remain larger than that of $v$. Usually a node has smaller depth than its ancestors only when many nodes suddenly become ancestors of the node in a short time, which is also a rare event.

Second, even if a new cycle appears, it will disappear in a short time. By \depUpdate{0}, depths of nodes in a cycle keep updating and can count to a large value, just like the ``counting to infinity'' problem in network routing. Whenever a node with a large depth meets a leaf node, changes as shown in \Cref{f.elimCycle} can happen and thereby the cycle disappears. At least half of the nodes in a tree are leaves, so by random sampling, cycles are eliminated quickly.

In summary, we argue that distributed depth update does not harm performance much compared to that under \Cref{a.freeError}, which is supported by simulations in \Cref{sec.sim}.

\section{ Rate of Convergence}\label{sec_converge}

Analyzing the convergence time of running \algwhole{0} is highly challenging.  
This section provides a stochastic bound for the convergence time under the case of a single tree, i.e, $M=K=1$. We further assume that each node has outdegree at least $2$, i.e., $\bar{d}_u\geq2$ for all $u\in V$. And suppose \Cref{a.freeError} holds so that each node knows its depth to the root. For theoretical tractability, instead of running \algwhole{0}, we simply the algorithm by assuming that only ``Add'' in \greedy{0} and ``Jump'' in \single{0}  run when two nodes meet.

The simplified model is summarized as follows: each node knows its depth to the root; whenever a node's Poisson clock ticks, the node samples a target uniformly at random, if the target is available and the depth of the target is less than the depth of the node by at least two, the node removes its current incoming link if there is and builds a new incoming link from the target; otherwise nothing happens. Assume initially each node has at most one incoming link, then convergence time is upper bounded:
\begin{prop}\label{p.rate}
Let $T$ be the first time for the maximum depth of all nodes to be bounded up by $\cei{\log_2(N+1)}$, then
$$
\forall \epsilon > 0, P\left[T> 21\log_2(N+1) + 16\epsilon\right] < 3e^{-\epsilon}.
$$
\end{prop}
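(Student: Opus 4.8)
The plan is to exploit a monotonicity property of the simplified dynamics and then to bound the convergence time by a short sum of independent exponential waiting times, to which a Chernoff bound for Gamma random variables applies. Write $L^{\ast}:=\cei{\log_2(N+1)}$ and, since $M=K=1$, drop the color subscript, writing $L(u)$ for the depth of $u$ (with $L(u)=+\infty$ when $u$ is not yet in the tree). For $k\ge 0$ let $N_k(t)$ be the number of nodes at depth at most $k$ at time $t$; then the target event is exactly $\{T\le t\}=\{N_{L^{\ast}}(t)=N\}$. The first step is the key structural fact that no node's depth ever increases: when the sampler $u$ relocates from its parent $p$ to the shallower available target $v$, the only edge removed is $(p,u)$, and every node whose shortest route used that edge lies in the subtree rooted at $u$, for which the new route through $v$ is strictly shorter, while no other node's route changes. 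Hence each $L(u)$ is nonincreasing, so for every fixed $k$ the counter $N_k(t)$ is nondecreasing and $\{N_{L^{\ast}}=N\}$ is the hitting time of a monotone process. This also shows that any cycles present at time $0$ dissolve automatically, since their nodes have depth $+\infty$ and behave exactly like uncovered nodes, so no separate coverage argument is required.

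The engine of the timing bound is a quantitative availability lemma. Call a node \emph{shallow} if its depth is at most $L^{\ast}-1$, and \emph{available} if it is in the tree with a free out-slot. Every shallow non-root node is the child of a shallow node, so the number of children of shallow nodes equals $N_{L^{\ast}}-1$; since every full node has at least $\bar d_u\ge 2$ children, the number of full shallow nodes is at most $(N_{L^{\ast}}-1)/2$, whence the number of available shallow nodes is at least $N_{L^{\ast}-1}-(N_{L^{\ast}}-1)/2$, which is $\Omega(N_{L^{\ast}-1})$ so long as level $L^{\ast}$ is not overloaded (say $N_{L^{\ast}}\le \tfrac32 N_{L^{\ast}-1}$). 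I would use this in two regimes. In Phase~I (growth), while a constant fraction of nodes is still deep, the lemma gives $\Omega(m)$ available shallow parents, where $m$ is the current shallow count; each of the $\Omega(N)$ deep nodes samples such a parent at rate $\Omega(m/N)$, so $m$ grows at rate $\Omega(m)$, a supercritical birth process reaching a constant fraction of $N$ in $O(\log N)$ time. In Phase~II (finishing), once that fraction is in place with slack, the same inequality yields $\Omega(N)$ available shallow parents throughout, so the number $B(t)$ of remaining deep nodes obeys a death process of rate $\Omega(B)$ and hits $0$ in $O(\log N)$ time.

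For the explicit tail I would stochastically dominate the duration of each phase by a sum of $O(\log N)$ independent exponential waiting times whose rates are bounded below: the time to increase the shallow count from $m$ to $m+1$, respectively to decrease the deep count from $B$ to $B-1$, is dominated by an exponential with the rate computed above. For a sum $\Gamma=\sum_{i=1}^{n}E_i$ of independent exponentials of rate at least $\lambda$, the elementary bound $P[\Gamma>\tfrac{2}{\lambda}(n\ln 2+\epsilon)]\le e^{-\epsilon}$ (via the moment generating function evaluated at $\lambda/2$) gives a tail of the desired shape $a\log_2(N+1)+b\,\epsilon$, and a union bound over the constant number of growth/finishing events produces the prefactor in $3e^{-\epsilon}$; propagating the rate constants through the two regimes yields the coefficients $21$ and $16$.

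The hard part will be controlling the interaction between covering all nodes and keeping $\Omega(N)$ available shallow parents as the shallow region approaches saturation. The availability lemma degrades exactly when nodes begin to pile up at depth precisely $L^{\ast}$ (so that $N_{L^{\ast}}$ grows much larger than $N_{L^{\ast}-1}$), so Phase~I must be shown to drive a constant fraction of nodes strictly below $L^{\ast}$, with room to spare, rather than merely into depth $\le L^{\ast}$; a depth-$(L^{\ast}-1)$ binary tree has capacity at least $N$, which gives the necessary slack, but turning this into a uniform lower bound on the event rate requires care. The remaining delicacy is making the stochastic domination by sums of exponentials genuinely rigorous: the birth/death rate estimates must hold uniformly over the random configuration despite the complicated dependence among the Poisson clocks and the tree topology, which I expect to handle by coupling the true process to a slowed-down birth (resp. death) chain on $N_{L^{\ast}-1}$ (resp.\ $B$) and verifying that the coupling preserves the rate lower bounds on the relevant event of high probability.
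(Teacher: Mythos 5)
Your overall skeleton --- monotone depths, a slot-counting availability bound, a growth phase followed by a finishing phase, then a Chernoff/union-bound optimization --- matches the paper's (\Cref{lem.T0,lem.T1}), and your Phase~II is essentially the paper's \Cref{lem.T1}: writing $L^{\ast}=\cei{\log_2(N+1)}$, once $N_{L^{\ast}-1}\geq(1-\alpha)N$ holds it persists by monotonicity, the slot count gives at least $N_{L^{\ast}-1}-(N_{L^{\ast}}-1)/2\geq(\frac{1}{2}-\alpha)N$ available parents of depth $\leq L^{\ast}-1$ forever after, each deep node finishes at rate $\Omega(1)$, and the product bound gives the $e^{-\epsilon}$ tail. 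The genuine gap is Phase~I. First, an off-by-one error: a deep node that grabs an available parent of depth $\leq L^{\ast}-1$ lands at depth $L^{\ast}$, so the events you describe grow $N_{L^{\ast}}$, not your birth variable $m=N_{L^{\ast}-1}$; to grow $m$ you need available parents of depth $\leq L^{\ast}-2$. Second, and decisively, the claim ``$m$ grows at rate $\Omega(m)$'' is false at legal configurations. Example (all degrees $\geq 2$, each node at most one incoming link): a path $v_0=\mathrm{root}\to v_1\to\cdots\to v_{L^{\ast}-2}$ whose nodes each carry one side-child; all side-children are childless leaves except the one at depth $L^{\ast}-2$, call it $w$, which has degree $\flo{\alpha N}$ and is full, its children sitting at depth $L^{\ast}-1$; all remaining nodes are uncovered. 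Here $m\approx\alpha N$, but the only available nodes of depth $\leq L^{\ast}-2$ are the $O(\log N)$ path side-children, so $m$ grows at rate $O(\log N)$, not $\Omega(N)$; a birth-chain domination built on this configuration gives $\Omega(N/\log N)$ time for Phase~I, not $O(\log N)$. (The true process is still fast here, but its progress happens at level $L^{\ast}$, fed by $w$'s children --- the ``active'' level wanders with the random profile, which is exactly what a fixed-level scalar chain cannot see.)

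Your own ``hard part'' paragraph flags this, but the proposed patch (drive a constant fraction strictly below $L^{\ast}$ ``with room to spare'') is circular: growing $N_{L^{\ast}-1}$ needs non-saturation of level $L^{\ast}-1$ relative to $L^{\ast}-2$, growing $N_{L^{\ast}-2}$ needs the same one level down, and so on --- a cascade, not a side condition. This is precisely why the paper abandons the scalar viewpoint in \Cref{lem.T0}: it tracks the whole vector $Z=(Z_0,\dots,Z_{l_\alpha})$, lower-bounds the jump rate of $Z_i$ by $\frac{\alpha}{2}(1+2Z_{i-1}-Z_i)^{+}$ (using the auxiliary process $G'$ to keep at least $\alpha N$ deep samplers at all times), couples $Z$ with a slower chain $\tZ$ having exactly these rates --- so $\tZ_i$ is capped at $2^{i+1}-1$ and pile-up is built into its dynamics rather than assumed away --- and then compares $E[\tZ]$ with the linear ODE system of \Cref{lem.diff}, whose solution yields the Poisson tail $2^{l_\alpha+1}P[Pois(\alpha t/2)\leq l_\alpha-1]$; Markov's inequality on the deficit $2^{l_\alpha+1}-1-\tZ_{l_\alpha}$ finishes Phase~I. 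Without this (or an equivalent multi-level pipeline argument) your sum-of-exponentials domination has no uniform rate bound to stand on. Note also that the constants $21$ and $16$ are not free: in the paper they come from optimizing $\alpha\approx 0.37$ jointly across the two phases' bounds, so they cannot simply be asserted to ``propagate through'' your rate constants.
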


Notice that the maximum depth is $+\infty$ if there is a node to which no path exists from the root. So \Cref{p.rate} bounds the time for the tree to cover all nodes and achieve balance. It implies that the model converges in $O(\log N)$ time.

We argue that for the case of multiple trees similar bounds as in \Cref{p.rate} can be generated, and by running \algwhole{0} the network can converge in $O(\log N)$ time. Because when targets are unavailable, ``LeafSwap'' substitutes ``Jump''  efficiently since half nodes are leaves, and ``Insert'' substitutes ``Add'' as well.

The proof of \Cref{p.rate} is provided below.

\subsection{proof of \Cref{p.rate}}

We assume nodes sample targets randomly at times of Poisson processes with rate $1$. Equivalently, we can assume that each $\langle node, node\rangle$ pair maintains a Poisson clock which ticks at rate ${1\over N}$. The following definitions are applied for the proof.

\begin{defi}
Define $l_f := \flo{\log_2(N+1)} -1$ and $l_c := \cei{\log_2(N+1)}-1$. 

Define $l_\alpha := \cei{\log_2\left((1-\alpha) N+1\right)}-1$.

Define $Z_i, i\geq 0$ to be the number of nodes with depths  $\leq i$. Note that $Z_i(t)$ is a discrete counting process. 

Define $Z_{-1} = \tZ_{-1} \equiv 0$.
\end{defi}

The model describes a Markov process with state being $G=(V,E)$. We apply $G=(V,E)$ to denote the process as well as the graph.
It is not difficult to see that graph $G=(V,E)$ can converge to a balanced tree covering all nodes in finite time, because 1) the depth of each node is nonincreasing, i.e., $Z_i$ for each $i$ is nondecreasing, and 2) there exist nodes which can decrease their depths if $(V,E)$ is not balanced or $(V,E)$ does not cover all nodes.

The process is separeted into two phases, illustrated by \Cref{lem.T0,lem.T1}, respectively. The time for the first phase is described below.

\begin{lemma}\label{lem.T0}
For any $ \alpha\in(0,1)$, 
let $T_0$ be the first time that $Z_{l_{\alpha}} \geq (1-\alpha) N$,
$$
P[T_0 >  t] \leq 2^{l_\alpha+1} P[Poi(\alpha t/2)\leq l_{\alpha}-1].
$$
\end{lemma}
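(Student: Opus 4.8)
The plan is to translate the target event $\{Z_{l_\alpha}\ge(1-\alpha)N\}$ into the full occupation of a reference complete binary tree and then bound occupation times by first-passage percolation. Fix a complete binary tree $\mathcal B$ of depth $l_\alpha$; it has $2^{l_\alpha+1}-1<2^{l_\alpha+1}$ positions, and by the definition of $l_\alpha$ this number is at least $(1-\alpha)N$. I would couple the Markov process $G=(V,E)$ to an occupation process on $\mathcal B$ in which the root position is occupied at time $0$ (the real root always has depth $0$) and a position at depth $d$ may become occupied only after its parent is, each occupied real node of depth $d\le l_\alpha-1$ with a free out-slot ``recruiting'' a child into the corresponding level-$(d+1)$ position. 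The coupling is arranged so that the number of occupied positions at depth $\le i$ never exceeds $Z_i$; consequently, if every position of $\mathcal B$ is occupied by time $t$ then $Z_{l_\alpha}(t)\ge 2^{l_\alpha+1}-1\ge(1-\alpha)N$ and hence $T_0\le t$. Taking contrapositives, $\{T_0>t\}$ is contained in the event that some position of $\mathcal B$ is still unoccupied at time $t$.

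The second step is to lower bound the recruitment rate. While $t<T_0$ we have $Z_{l_\alpha}(t)<(1-\alpha)N$, so there are strictly more than $\alpha N$ nodes of depth exceeding $l_\alpha$ (counting nodes with $L(u)=+\infty$, which are not yet covered). Every such node has depth at least $l_\alpha+1\ge d+2$ relative to any recruiter of depth $d\le l_\alpha-1$, so under the ``Add''/``Jump'' rule it relinks into a free slot of that recruiter the instant it samples it, which happens at rate $\tfrac1{N-1}$ per node. Summing over the $\ge\alpha N$ eligible nodes, each empty slot below an occupied position of depth $\le l_\alpha-1$ fills at rate at least $\alpha/2$ throughout $[0,T_0)$, the factor $\tfrac12$ leaving room for availability and $N$-versus-$(N-1)$ effects. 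By the usual first-passage comparison on a tree, the occupation time $W_p$ of a position $p$ at depth $d(p)$ is then stochastically dominated by $\sum_{j=1}^{d(p)}\eta_j$ with $\eta_j$ i.i.d.\ $\mathrm{Exp}(\alpha/2)$, and since $d(p)\le l_\alpha$ this is in turn dominated by a sum of $l_\alpha$ such variables.

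It remains to assemble the bound. Using $\{T_0>t\}\subseteq\bigcup_p\{W_p>t\}$ and a union bound over the fewer than $2^{l_\alpha+1}$ positions, $P[T_0>t]\le 2^{l_\alpha+1}\,P\big[\sum_{j=1}^{l_\alpha}\eta_j>t\big]$. The Poisson--Erlang duality $P[\sum_{j=1}^{l_\alpha}\eta_j>t]=P[\mathrm{Poi}(\alpha t/2)\le l_\alpha-1]$ (a rate-$\alpha/2$ Poisson process has fewer than $l_\alpha$ points in $[0,t]$ exactly when its $l_\alpha$-th arrival exceeds $t$) then yields the claimed inequality; positions of depth $d<l_\alpha$ only help, since $P[\mathrm{Poi}(\alpha t/2)\le d-1]\le P[\mathrm{Poi}(\alpha t/2)\le l_\alpha-1]$. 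I expect to record the occupation counts as the auxiliary processes $\tZ_i$ (with $\tZ_{-1}\equiv0$) so that $\tZ_i\le Z_i$ is the invariant maintained by the coupling.

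The main obstacle is making the coupling rigorous rather than heuristic. I must exhibit a matching of occupied reference positions to distinct real nodes of the correct depth that survives continual relinking: a ``Jump'' both fills a new slot and vacates the mover's former parent, so the matching has to be robust to these vacations, using the monotonicity the excerpt already supplies ($Z_i$ nondecreasing). A second, milder, difficulty is the circularity that the rate bound $\ge\alpha/2$ is valid only before $T_0$; this is handled by running the dominating percolation at rate exactly $\alpha/2$ for all time, so that its occupation under-counts the real process on $\{T_0>t\}$, which is the only event on which the estimate is invoked.
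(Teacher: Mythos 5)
Your skeleton is closer to the paper's proof than it may appear: the paper's auxiliary process $\tilde{Z}$ is precisely the level-count process of the first-passage percolation you describe (a complete binary tree in which each empty slot under an occupied position fills at rate $\alpha/2$, capped at $2^{i+1}-1$ positions per level), its comparison lemma (\Cref{lem.diff}) is your Erlang--Poisson duality, and Markov's inequality applied to $\Delta(t)=2^{l_\alpha+1}-1-\tilde{Z}_{l_\alpha}(t)$ is exactly your union bound over unoccupied positions. Your endgame computation is correct, and the post-$T_0$ circularity you flag is handled in the paper by a phantom-node device (after $T_0$, departed deep nodes are replaced so that at least $\alpha N$ deep nodes persist), which is a cleaner version of your last remark.

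The genuine gap is the domination step itself, and the factor $1/2$ cannot repair it. In your position-wise coupling, the empty child slots of an occupied position $p$ fill at rate at least $\alpha/2$ only if the real node $\phi(p)$ matched to $p$ has a free out-slot; when $\phi(p)$ is unavailable the recruitment rate for those slots is \emph{zero}, so ``room for availability'' is not a constant-factor effect. This bad case genuinely occurs under any greedy matching: a node $c$ already matched to a deep position $q'$ can later ``Jump'' into the last free slot of $\phi(p)$; then $\phi(p)$ is unavailable, all of its real children are matched elsewhere (injectivity forbids reusing $c$ for a child position of $p$), and the empty children of $p$ stay blocked for an unbounded time --- monotonicity of the $Z_i$ does not rescue this. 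What is missing is the counting argument that converts unavailability into coverage, which is the heart of the paper's proof: marking two ``red'' out-slots per node, at most $(Z_i-1)/2$ nodes of depth at most $i-1$ can have both red slots taken, so the number of \emph{available} nodes of depth at most $i-1$ is at least $(1+2Z_{i-1}-Z_i)^+/2$, whence the total jump rate of $Z_i$ is at least $\frac{\alpha}{2}(1+2Z_{i-1}-Z_i)^+$. This bound is intrinsically aggregate --- it gives rate $\alpha/2$ per empty slot \emph{on average}, not per slot --- and accordingly the coupling must be carried out on the level counts $(Z_0,\dots,Z_{l_\alpha})$ against the percolation level counts, as the paper does, rather than position by position. (A per-position version would require a dynamic Hall-type assignment of empty positions to available nodes, and even that rests on the red-slot count your proposal does not contain.)
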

\begin{proof}
Define an alternative process $G' = (V',E')$ such that it is identical to the original process $G$ when $t< T_0$; when $t \geq T_0$, whenever a node with depth $>l_\alpha$ changes its depth, a new node with fan-out degree $2$ whose depth is $\infty$ arrives to $G'$. After $T_0$, $|V'|$ may increase but we assume the Poisson clock of each $(node,node)$ pair still ticks at rate $\mu=1/N$. On $G'$, the number of nodes with depths $>l_\alpha$ does not change after $T_0$, and always $\geq \alpha N$. The probability $P[t<T_0]$ is identical for processes $G$ or $G'$. 
{\em In the following, our discussion are on process $G'$. For simplicity, we apply the same notations for $G'$ as for $G$}. Let $Z=(Z_0(t),Z_1(t),...Z_{l_\alpha}(t))$.

Notice that the number of available nodes with depths $\leq i$ is larger or equal to $(1+2Z_{i-1}-Z_{i})^+/2$: consider each node labels its outgoing degrees and marks the first two degrees red. Each node has at least $2$ red degrees but there are only $Z_{i}-1$ nodes to serve. So there are at most $(Z_{i}-1)/2$ nodes whose red degrees are both taken. The number of nodes with depths $>l_\alpha$ is larger or equal to $\alpha N$. Thus, if $i\leq l_\alpha$, the transition rate for $Z_i$ to jump is lower bounded by 
$$
\mu\alpha N (1+2Z_{i-1}-Z_{i})^+/2 = {\alpha \over 2} (1+2Z_{i-1}-Z_{i})^{+}.
$$ 

There exists a process $\tZ=(\tZ_0(t), \tZ_1(t),...\tZ_{l_\alpha}(t))$ in $\mathcal{Z}_+^{l_\alpha+1}$ on an extended probability space such that each coordinate of $\tZ$ has jumps of size one and  jump rate for $\tZ_i$ is ${\alpha } (1+2\tZ_{i-1}-\tZ_{i})^{+}/2$. Notice that simultaneous jumps of different coordinates of $\tZ$ are allowed. Let $\tZ(0)=(1,1,1,...1)$. Initially we have $Z(0)\geq \tZ(0)$.

Process $Z$ and $\tZ$ can be coupled so that $Z(t)\geq \tZ(t)$ with probability one for all $t$. That is because if $Z\geq\tZ$ then jump rate of $Z_i$ is larger or equal to jump rate of $\tZ_i$ for all $i$ such that $Z_i = \tZ_i$. So the jumps of $\tZ$ can be obtained by generally thinning the jumps of $Z$, and adding more jumps to $\tZ_i$'s with $Z_i>\tZ_i$.

By induction it is easy to show that $\tZ_i(t)\leq 2^{i+1}-1$ with probability one for all $t$, because jump rate of $\tZ_i$ is zero if $\tZ_i= 2^{i+1}-1$. Moreover, $\forall  i\in[0, l_\alpha]$, 
\begin{eqnarray*}
{d E[\tZ_i(t)]\over dt} &=& E\left[{\alpha \over 2} \left(1+2\tZ_{i-1}(t)-\tZ_{i}(t)\right)^{+}\right]\\
&\geq& {\alpha\over 2}\left(1 + 2E[\tZ_{i-1}(t)]  - E[\tZ_i(t)]\right)^{+}.
\end{eqnarray*}

Let $\mathbf{y}(t) = (y_0(t), y_1(t),...y_{l_f-1}(t))$ to be a motion trajectory defined by the following linear differential equation:
\begin{eqnarray*}
&& y_i'(t) = {\alpha\over 2}\left(1 + 2y_{i-1}(t) - y_i(t)\right)^{+},\\
&&y_i(0) =1, y_{-1}(t) = 0, 0\leq i\leq l_\alpha
\end{eqnarray*}
By induction it is easy to prove that, $E[\tZ_i(t)] \geq y_i(t)$.

Define $\Delta(t) := \left({2^{l_\alpha+1}-1} - \tZ_{l_\alpha}(t)\right)$.
Notice that $\Delta(t)\geq 0$, apply Markov's inequality and \Cref{lem.diff}, 
\begin{eqnarray*}
&&P[T_0>t] = P[Z_{l_\alpha}(t) \leq (1-\alpha)N-1] \\
&\leq& P[\tZ_{l_\alpha}(t) \leq 2^{l_\alpha+1}-2] = P[\Delta(t) \geq 1] \\
&\leq& E[\Delta(t)] \leq 2^{l_\alpha+1}-1 - y_{l_\alpha}(t)\\
&\leq& 2^{l_\alpha+1} P[Pois(\alpha t/2)\leq l_{\alpha}-1].
\end{eqnarray*}
The lemma follows. 
\end{proof}
\begin{lemma}\label{lem.diff}
Let $\mathbf{y}(t) = (y_0(t), y_1(t),...y_{k}(t))$ to be a motion trajectory defined by the following:
\begin{eqnarray*}
&&y_i'(t) = \beta\left(1 + 2y_{i-1}(t) - y_i(t)\right)^{+}, s.t\\
&&y_{-1}(t) = 0, y_0(0) = 1, y_i(0) \geq 0, 0\leq i\leq k,
\end{eqnarray*}
where $\beta>0$ is a constant, then
$$y_i(t)\geq 2^{i+1}\left\{1-P[Pois(\beta t)\leq i-1]\right\} -1.$$
\end{lemma}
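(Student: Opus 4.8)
The plan is to recognize that the system has a unique equilibrium and to bound the rate at which each coordinate relaxes to it. First I would check that the constant vector $y_i^\star := 2^{i+1}-1$ (with $y_{-1}^\star=0$) is the fixed point: it satisfies the recursion $y_i^\star = 1 + 2y_{i-1}^\star$, so every positive part $\left(1+2y_{i-1}^\star - y_i^\star\right)^+$ vanishes and the trajectory is stationary. This is exactly the value appearing in the claim, since $2^{i+1}\{1 - P[\mathrm{Pois}(\beta t)\le i-1]\} - 1 = y_i^\star - 2^{i+1}P[\mathrm{Pois}(\beta t)\le i-1]$. Accordingly I would pass to the gap variable $z_i(t) := y_i^\star - y_i(t) = 2^{i+1}-1-y_i(t)$ and prove the equivalent upper bound $z_i(t)\le 2^{i+1}P[\mathrm{Pois}(\beta t)\le i-1]$ for all $i$ and all $t$.

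Two differential identities drive the argument. A one-line computation gives $1+2y_{i-1}-y_i = z_i - 2z_{i-1}$, so the gap variables obey $z_i'(t) = -\beta\left(z_i(t)-2z_{i-1}(t)\right)^+$, with $z_{-1}\equiv 0$ and $z_0\equiv 0$ (the latter because $y_0\equiv 1$). On the bounding side I would set $w_i(t) := 2^{i+1}p_i(t)$ with $p_i(t) := P[\mathrm{Pois}(\beta t)\le i-1]$, and use the two Poisson identities $p_i - p_{i-1} = P[\mathrm{Pois}(\beta t)=i-1]$ and $p_i'(t) = -\beta\,P[\mathrm{Pois}(\beta t)=i-1]$ (the second obtained by telescoping the derivative of the truncated Poisson series). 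Together these give the clean linear relation $w_i'(t) = -\beta\left(w_i(t)-2w_{i-1}(t)\right)$, which is identical to the $z_i$ equation except that the positive part has been dropped.

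The bound $z_i\le w_i$ then follows by induction on $i$ through an ODE comparison principle. The base case $i=0$ is immediate since $z_0\equiv 0$ and $w_0\equiv 0$. For the step, assume $z_{i-1}(t)\le w_{i-1}(t)$ for all $t$. Because $x\mapsto x^+$ is nondecreasing, this hypothesis yields $z_i' = -\beta(z_i-2z_{i-1})^+ \le -\beta(z_i-2w_{i-1})^+ =: f(t,z_i)$, so $z_i$ is a subsolution of $\dot x = f(t,x)$; meanwhile, using $a\le a^+$, we get $w_i' = -\beta(w_i-2w_{i-1}) \ge -\beta(w_i-2w_{i-1})^+ = f(t,w_i)$, so $w_i$ is a supersolution of the same equation. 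Since $f(t,\cdot)$ is $\beta$-Lipschitz (the positive part is $1$-Lipschitz) and the initial data obey $z_i(0)\le 2^{i+1}-1 < 2^{i+1} = w_i(0)$ (using $y_i(0)\ge 0$ and $p_i(0)=1$ for $i\ge 1$), the comparison principle forces $z_i(t)\le w_i(t)$ for all $t\ge 0$. Undoing the substitution gives $y_i = y_i^\star - z_i \ge y_i^\star - w_i$, which is precisely the stated inequality.

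I expect the main obstacle to be organizational: packaging the two ``$^+$ versus no $^+$'' facts into a single comparison step. The inductive hypothesis is what lets me replace $z_{i-1}$ by $w_{i-1}$ inside the subsolution inequality, while the supersolution property of $w_i$ rests only on the elementary bound $a\le a^+$; one must keep track that both inequalities feed the \emph{same} right-hand side $f$. Care is also needed in invoking the comparison theorem with a genuinely Lipschitz but non-smooth right-hand side and in securing the strict initial inequality $z_i(0)<w_i(0)$ for $i\ge 1$. Beyond that, the only computation requiring attention is the telescoping derivative of the truncated Poisson series that produces $w_i' = -\beta(w_i-2w_{i-1})$, which is routine but slightly fiddly.
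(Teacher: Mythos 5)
Your proof is correct and follows essentially the same route as the paper's: both pass to the gap variables $2^{i+1}-1-y_i(t)$, compare the resulting positive-part system against a linear chain by induction on $i$ plus an ODE comparison argument, and identify the linear bound with truncated Poisson probabilities. The only difference is bookkeeping --- the paper normalizes by $2^{i+1}$, rescales time by $\beta$, and solves the linear system $\theta_i'=\theta_{i-1}-\theta_i$ explicitly before bounding its initial data by $1$, whereas you verify directly that the Poisson-tail trajectory $w_i(t)=2^{i+1}P[\mathrm{Pois}(\beta t)\le i-1]$ is a supersolution.
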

\begin{proof}
Let $\delta_i(t): = \left[{2^{i+1}-1} - y_i\left({t/\beta}\right)\right] / 2^{i+1}$, we can simplify the equation of $y$ as
\begin{eqnarray*}
\delta_i'(t) = \left(\delta_{i-1}(t) - \delta_i(t)\right)^{-}.
\end{eqnarray*}
Notice that by induction we can show that $\delta_i(t)\leq \theta_i(t)$, where
$$
\theta_i(t) = \theta_{i-1}(t) - \theta_i(t),\theta_i(0) = \delta_i(0).
$$
Solving the differential equation about $\theta$ gives that
$$
\delta_i(t)\leq \theta_i(t) = e^{-t}\sum_{k=0}^{i-1} \theta_{i-k}(0) {t^k\over k!}.
$$
$
\forall i, \theta_i(0) ={{2^{i+1}-1} - y_i\left({0}\right) \over 2^{i+1}} \leq 1.
$
Thus
\begin{eqnarray*}
2^{i+1}\delta_i(t)&\leq& 
2^{i+1}\theta_i(t)= 2^{i+1}P[Pois(t)\leq i-1],
\end{eqnarray*}
and so the lemma follows.
\end{proof}

The time of the second phase is described below.

\begin{lemma}\label{lem.T1}
For any $\alpha\in(0,0.5)$, 
given $Z_{l_\alpha}(0) \geq (1-\alpha)N$, let $T_1$ be the first time that $Z_{l_c+1} = N$, then
$$
P[T_1 \leq t] \geq \left[1 - e^{-(1-2\alpha)t/2}\right] ^{\alpha N}.
$$
\end{lemma}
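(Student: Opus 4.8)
The plan is to bound $T_1$ by the maximum of at most $\alpha N$ \emph{independent} exponential clocks, one per node whose depth still exceeds $l_\alpha$. Since the right-hand side $\left[1-e^{-(1-2\alpha)t/2}\right]^{\alpha N}$ is exactly the distribution function of the maximum of $\alpha N$ i.i.d.\ $\mathrm{Exp}\!\left((1-2\alpha)/2\right)$ variables, it suffices to show that each such node is permanently ``repaired'' no later than an independent $\mathrm{Exp}\!\left((1-2\alpha)/2\right)$ time. Call a node a \emph{good target} if it is available and has depth at most $l_\alpha$. If a node $u$ of depth at least $l_\alpha+2$ samples a good target $v$, then $v$'s depth is at most $l_\alpha$, i.e.\ at least two below $u$'s depth, so the ``Jump'' move fires and $u$ lands at depth at most $l_\alpha+1\le l_c+1$; because depths are nonincreasing, $u$ then stays at depth $\le l_c+1$ forever. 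Thus a node is permanently repaired the first time it samples a good target, and $T_1$ is at most the largest repair time over the (at most $\alpha N$) nodes that start with depth $>l_\alpha$.

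Next I would establish a \emph{time-uniform, deterministic} lower bound on the number of good targets. Since depths only decrease, $Z_{l_\alpha}(t)\ge Z_{l_\alpha}(0)\ge(1-\alpha)N$ for all $t$, while trivially $Z_{l_\alpha+1}(t)\le N$. Reusing the out-degree counting argument from the proof of \Cref{lem.T0} (each node reserves two out-slots, and the children of depth-$\le l_\alpha$ nodes number at most $Z_{l_\alpha+1}-1$), the number of available nodes of depth $\le l_\alpha$ is at least $\bigl(1+2Z_{l_\alpha}-Z_{l_\alpha+1}\bigr)^+/2\ge\bigl(1+(1-2\alpha)N\bigr)/2\ge(1-2\alpha)N/2$, which is positive precisely because $\alpha<1/2$. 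Hence at every time before repair, a node $u$ of depth $\ge l_\alpha+2$ (so $u$ is itself not one of these good targets) samples a good target with conditional probability at least $q:=\frac{(1-2\alpha)N/2}{N-1}\ge(1-2\alpha)/2$, regardless of the rest of the configuration.

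The delicate step, and the one I expect to be the main obstacle, is turning this per-node rate bound into a \emph{product} over nodes, which requires independent exponentials rather than a union bound. I would obtain independence by coupling at the fixed threshold $q$ instead of at the state-dependent sampling probability. For each node $u$ with initial depth $>l_\alpha$, use its own rate-$1$ sampling clock together with auxiliary i.i.d.\ uniforms $U_{u,k}$, and realize the $k$-th sampled target so that $\{U_{u,k}\le q\}$ forces a good target (possible since the conditional good-target probability is always $\ge q$). Then the repair time $\sigma_u$ of $u$ is dominated by $\hat\sigma_u$, the first clock tick with $U_{u,k}\le q$, which is the first point of the rate-$1$ clock thinned independently with probability $q$, so $\hat\sigma_u\sim\mathrm{Exp}(q)$. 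Crucially, $\hat\sigma_u$ is built only from node $u$'s own clock and uniforms, so the family $\{\hat\sigma_u\}$ is independent across nodes even though the true dynamics are coupled through the shared configuration.

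Finally I would assemble the pieces: writing $B$ for the set of at most $\alpha N$ nodes with initial depth $>l_\alpha$, we have $T_1\le\max_{u\in B}\sigma_u\le\max_{u\in B}\hat\sigma_u$, whence
$$
P[T_1\le t]\ \ge\ \prod_{u\in B}P[\hat\sigma_u\le t]\ =\ (1-e^{-qt})^{|B|}\ \ge\ \bigl(1-e^{-(1-2\alpha)t/2}\bigr)^{\alpha N},
$$
using independence in the equality, and $q\ge(1-2\alpha)/2$ together with $|B|\le\alpha N$ and $1-e^{-qt}\in[0,1]$ in the last inequality. The only real care needed is in the coupling of the previous paragraph and in verifying that the availability bound holds uniformly in time; both reduce to the monotonicity of $Z_{l_\alpha}$ and the deterministic slot count, so no intricate estimates are required.
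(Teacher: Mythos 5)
Your proposal is correct and follows essentially the same route as the paper's proof: a deterministic, time-uniform slot-counting bound giving at least $(1-2\alpha)N/2$ available nodes of small depth, hence a per-node repair rate of at least $(1-2\alpha)/2$, and a product over the at most $\alpha N$ initially-deep nodes coming from the independence of their sampling clocks. The differences are minor: you count available nodes at level $l_\alpha$ directly instead of the paper's maximum over $i\in[l_\alpha,l_c]$ with the $X_i$ bookkeeping, and you spell out the thinning coupling that the paper compresses into ``So the lemma follows''; just phrase the domination so that the forced-good-target event is only invoked at ticks where $u$ still has depth exceeding $l_c+1$ (so $u$ itself is guaranteed to lie outside the good set), since once $u$ is shallow it may itself be good and the conditional probability can dip just below $q$ --- harmless, because such a node no longer needs repair.
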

\begin{proof}
Let $X_i$ be the number of nodes with depths $=i$, and let $Y_i$ be the number of available nodes with depths $\leq i$.

Consider the jumping rate of $Z_{l_c + 1}$. Notice that $Z_{l_\alpha}(t)\geq (1-\alpha) N$, and
\begin{eqnarray*}
&&2Y_{l_c} \geq \max_{ i\in [l_\alpha, l_c]}\left\{1+Z_{i}-X_{i+1}\right\}\\
&\geq& \max_{ i\in [l_\alpha, l_c]}\left\{  (1-\alpha)N + \sum_{k=l_\alpha+1}^i X_k - X_{i+1} \right\}\geq (1-2\alpha) N.
\end{eqnarray*}

The last inequalities above is due to the fact that $\sum_{k=l_\alpha+1}^{l_c+1} X_k \leq \alpha N$. 

The rate for any node with depth $>l_c+1$ to jump to join $Z_{l_c+1}$ is at least $\mu Y_{l_c} \geq (1-2\alpha)/2$. So the lemma follows.
\end{proof}

Now \Cref{lem.T0,lem.T1} are combined to prove \Cref{p.rate}.
Consider \Cref{lem.T0}, apply the Chernoff bound for Poisson variable: $P[Poi(\lambda) \leq x] \leq {e^{-\lambda}(\lambda e)^x\over x^x}$ if $\lambda > x$, and $l_\alpha -1 <l_\alpha \leq \log_2(N+1)$, we have
$$
P[T_0>2t/\alpha]\leq 2\exp\left\{ r(1+ln 2 - {t\over r} + ln{t\over r}) \right\},
$$
where $r= \log_2(N+1)$. Notice that $ 1+ln 2 - {t\over r} + ln{t\over r}\leq -0.2 - 2(t/r-3)/3 = - 2(t/r-2.8)/3$, so
$$
P\left[T_0>{2\over \alpha}\left(2.8\log_2(N+1) +{3\over2}\epsilon\right)\right]\leq 2e^{ -\epsilon }.
$$
Consider \Cref{lem.T1}:
$$
P[T_1 \leq t] \geq 1 - \alpha N e^{-(1-2\alpha)t/2} \geq 1 -  (N+1) e^{-(1-2\alpha)t/2}.
$$
And so
$$
P\left[T_1 > {2\big(ln (N+1) + \epsilon\big)\over 1-2\alpha}\right] \leq e^{-\epsilon}.
$$
Choose $\alpha = 0.36987$, which minimizes $5.6/\alpha + 2\ln2/(1-2\alpha)$, we get
$$
P\left[T >21\log_2(N+1) +16\epsilon\right]\leq 3e^{-\epsilon}.
$$


\section{Simulation}\label{sec.sim}

We show that \algwhole{0} works pretty well under \Cref{a.init,a.root,a.2,a.link}, without \Cref{a.hat,a.noCycle,a.freeError}.
Let each node sample targets randomly with rate $\mu=1$ and run \algwhole{0}. \Cref{a.freeError} is not invoked so depths update distributedly. Notice that \algwhole{0} runs instantaneously in simulations. In each experiment below,  we set $N=1000$ fixed; at time $0$, we first set $E$ to be empty, then let each root $i$ build an \child{i} which is randomly selected from $V\setminus\mathcal{R}$. So at time $0$ $E$ contains $|\mathcal{R}|$ links and \Cref{a.init,a.root} both hold. We let \Cref{a.2,a.link} hold too.

Because \Cref{a.init} holds, during each simulation below $(V_i,E_i)$ for each $i$ is always a tree. Tree $i$ is given by $(V_i,E_i)$. Say that a node is {\em covered by tree $i$} if $L_i(u) < +\infty$, and say that a node is {\em fully covered} if it is covered by at least $K$ trees. 

In experiments below, the parameters chosen include $K,M$ and the degree vector $(\bar{d}_u)_{u\in V}$. For each selection of parameters $K,M,(\bar{d}_u)_{u\in V}$, we repeat running the experiment $500$ times, with each experiment running for $100$ time units and with system states recorded in the same time units. Metrics considered include the fraction of nodes fully covered and the maximum tree depth.

\subsection{Homogenous degrees and tight capacity constraint}
In this series of experiments, we let each root have degree $K-1$: $\forall i\in\mathcal{R},\bar{d}_i = K-1$, and let each non-root node have degree $K$: $\forall u\in V\setminus\mathcal{R}$, $\bar{d}_u = K$. The capacity is tight because the equality in \Cref{a.link} is achieved: $\sum_{u\in V}\bar{d}_u = KN-M$. Keep $K\geq 2$ so \Cref{a.2} holds. After repeating an experiment for $500$ times, for $a=0.2,1,5,50,100$,  we record metrics of the $a\%$ worst experiment at each time $t$. Notice that each line with legend ``worst case'' correponds to the case $a=0.2$, which means that there is no experiment performing worse than the line at any time. ($0.2\% * 500 = 1$).

\begin{figure}
\begin{subfigure}[l]{0.22\textwidth}
	\post{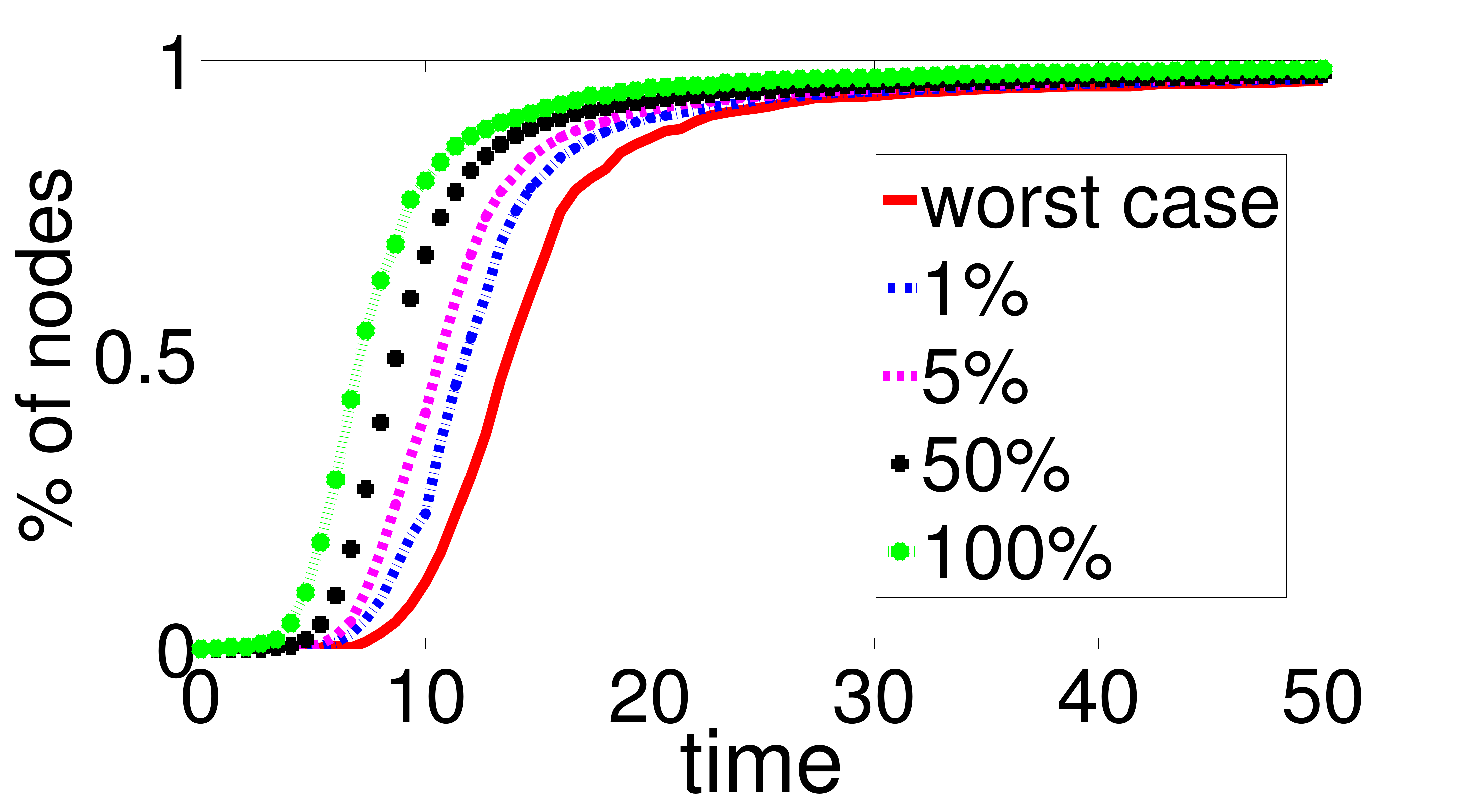}{4.3}
	\caption{$\%$ of nodes fully covered}
    \label{f.k2number}
\end{subfigure}
\begin{subfigure}[r]{0.22\textwidth}
	\post{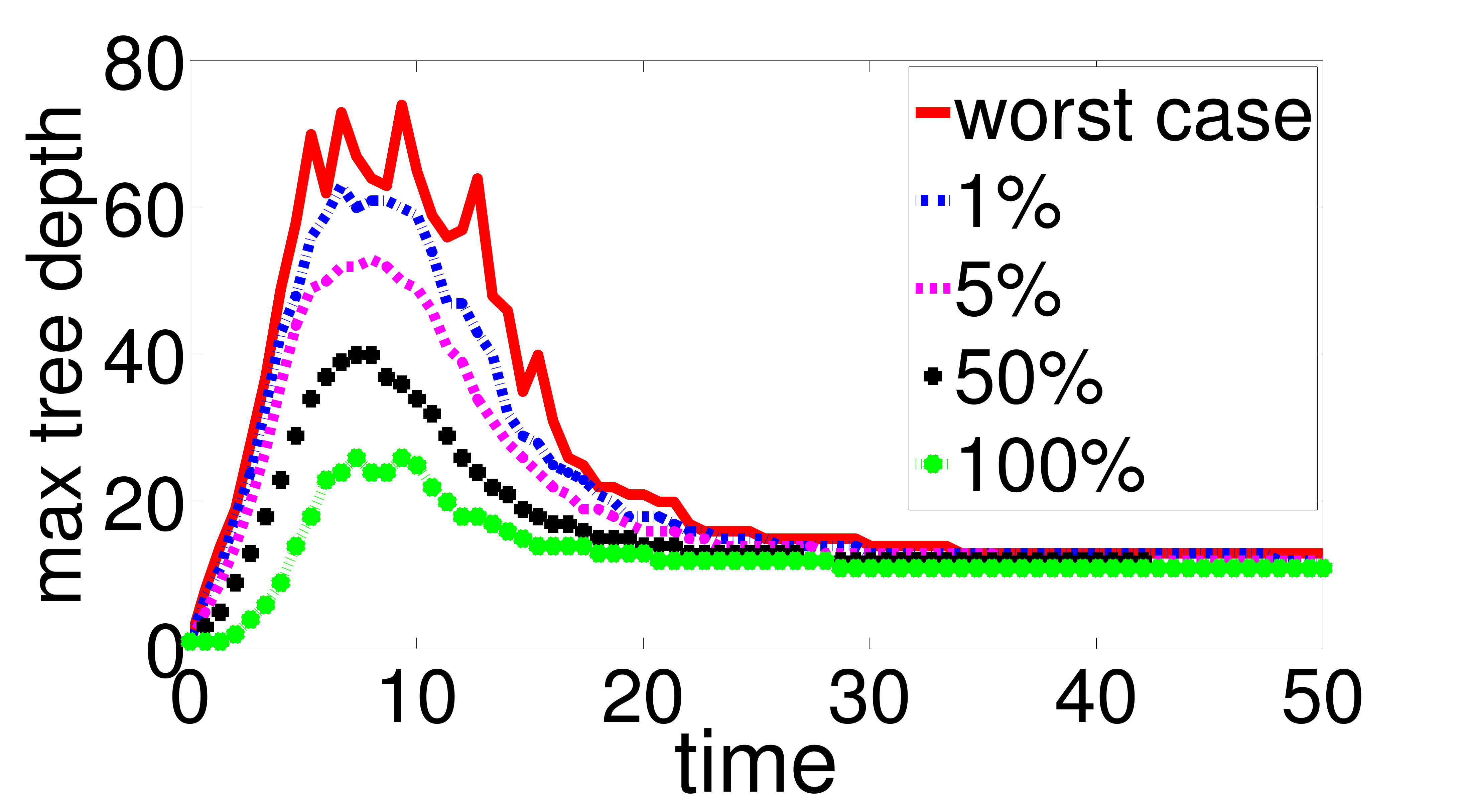}{4.3}
 	\caption{max tree depth}
\label{f.k2tree}
\end{subfigure}
\caption{Cumulative distribute when $M=K=2$. Point $(t,y)$ on a line legended $a\%$ means only in $a\%$ of $500$ experiments the corresponding metric at time $t$ is worse than $y$.}
\label{f.K2}
\end{figure}

For example, in \Cref{f.K2} we set $M=K=2$. A point $(t,y)$ in \Cref{f.k2number} on the line with legend ``$5\%$''  means that in $5\%$ of $500$ repeated experiments, the fraction of nodes fully covered at time $t$ is no larger than $y$; a point  $(t,y)$ in \Cref{f.k2tree} on the line with legend ``$1\%$''  means that in $1\%$ of $500$ repeated experiments, the max tree depth at time $t$ is no less than $y$. 

\Cref{f.K2} shows what a specific sample path looks like under \algwhole{0}. In \Cref{f.k2number}, we can see that the fraction of nodes fully covered increases almost exponentially from $0$ to $1$, over $90\%$ nodes are fully covered by time $25$ under $99\%$ experiments. That is because nodes can gradually increase the number of trees covering them until they get fully covered, as they meet other fully covered nodes.
{\em It appears that the fraction of nodes is almost nondecreasing over all time, which validates that cycles generated are rare and are quickly eliminated.} 
 As indicated in \Cref{f.k2tree}, the maximum tree depth increases linearly in the beginning, then decreases almost exponentially, and finally converges to below $12$. At time $25$, in $99\%$ repeated experiments max tree depths are below $20$. The rate of convergence follows \Cref{p.rate}, though \Cref{p.rate} is for the case of one tree.

Notice that not only in \Cref{f.K2}, but also in all our simulations below, the ``worst case'' lines are quite close to the ``1\%'' lines, but the latters are much more smooth than the former.
{\em In the following, we apply ``1\%'' lines instead of ``worst case'' lines to describe performance.}

\begin{figure}
\begin{subfigure}[l]{0.22\textwidth}
	\post{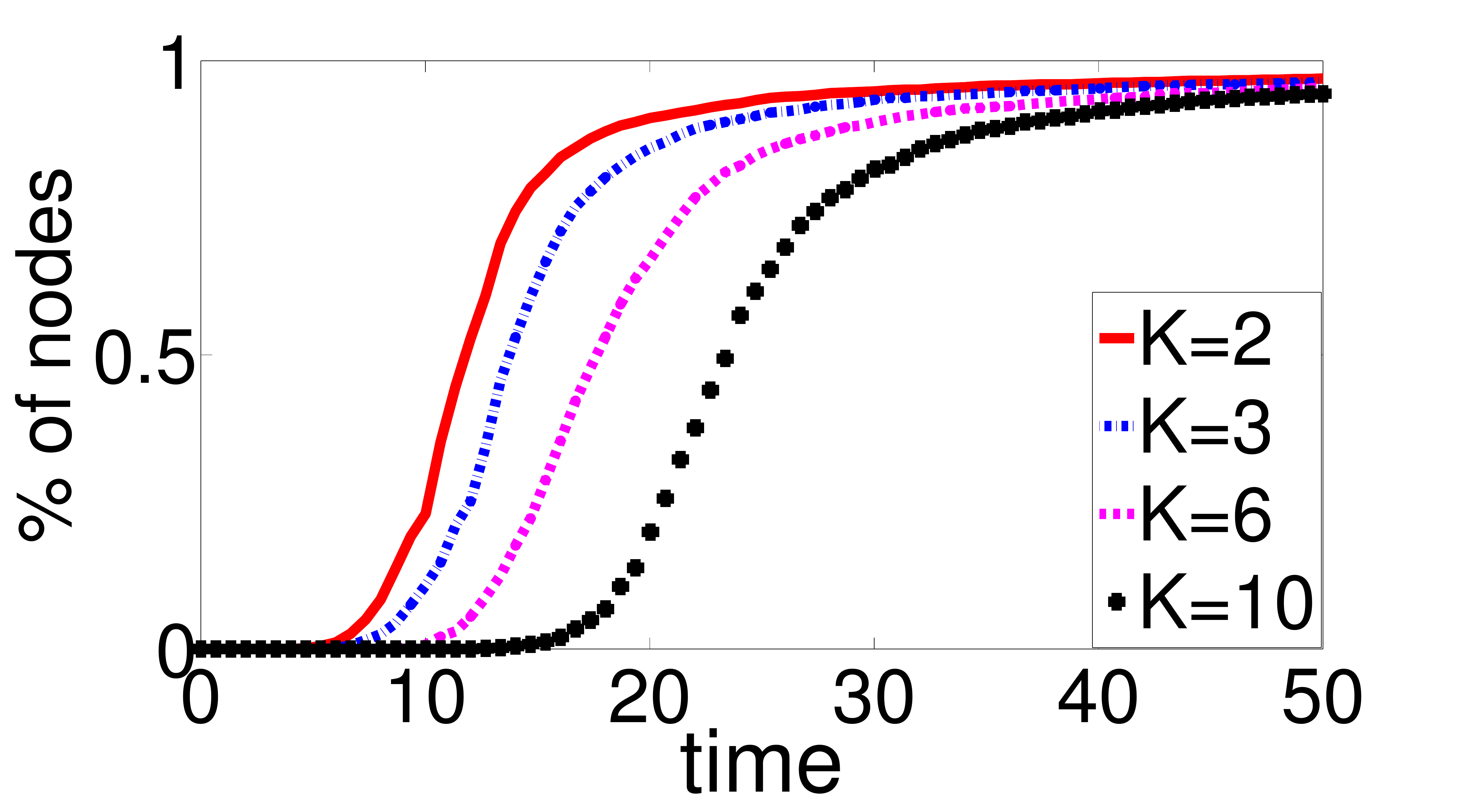}{4.3}
	\caption{$\%$ of nodes fully covered}
    \label{f.k2_10number}
\end{subfigure}
\begin{subfigure}[r]{0.22\textwidth}
	\post{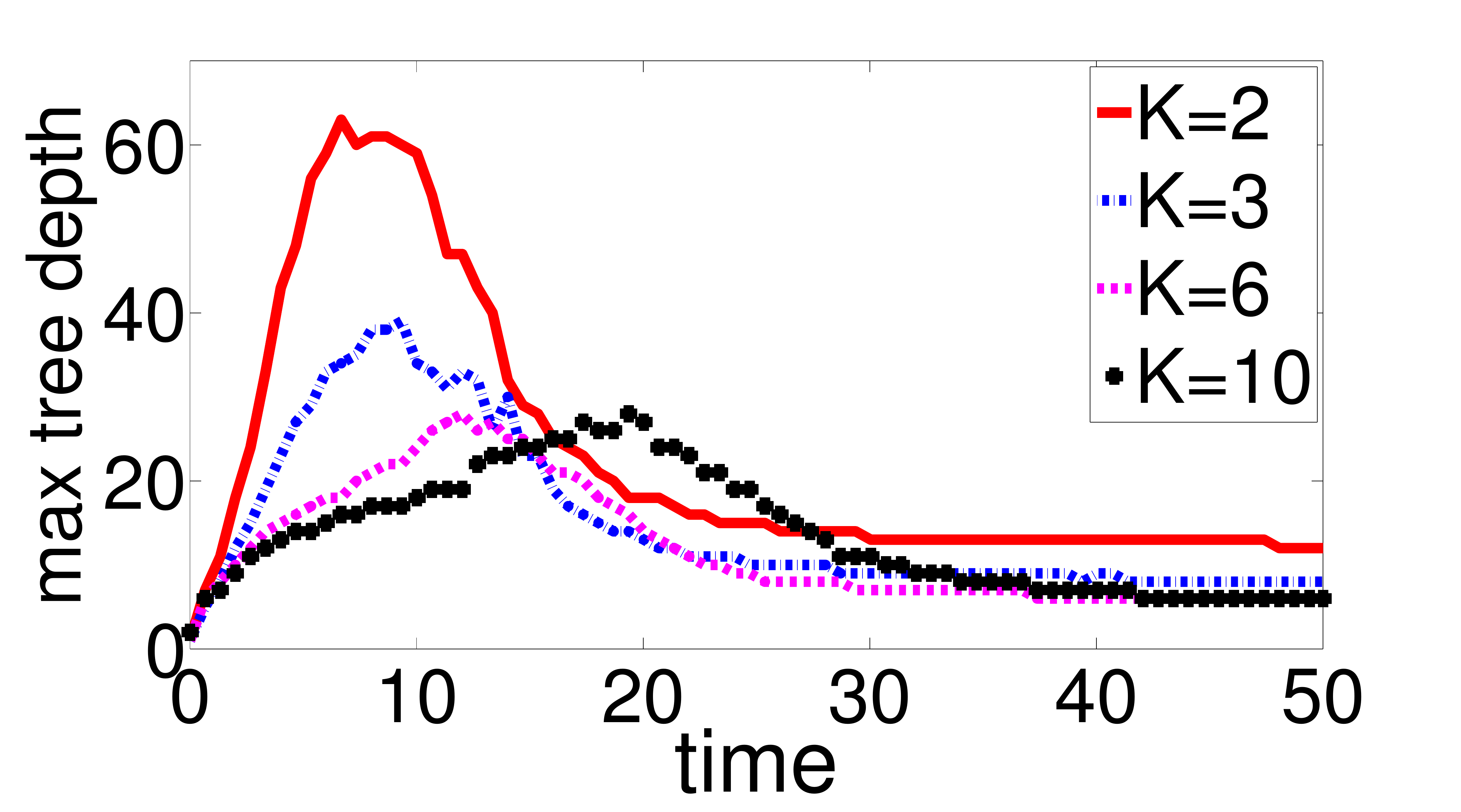}{4.3}
 	\caption{max tree depth}
\label{f.k2_10tree}
\end{subfigure}
\caption{$1\%$ cumulative lines when $M=K$ varies. }
\label{f.K2_10}
\end{figure}

In \Cref{f.K2_10} we test different $K$'s so as to make sure convergence follows in other cases. Notice that typically in practice $K$ is below $10$. Set $M=K$ with $K$ varying in $\{2,3,6,10\}$. We expect to observe similar images as in \Cref{f.K2}, with longer convergence time as $K$ increases because each node has to get covered by more trees. For each $K$, we draw the $1\%$ worst case line in \Cref{f.K2_10}. As expected, for each $K$ both the fraction of nodes fully covered and the max tree depth converge, as in \Cref{f.K2}. 
In \Cref{f.k2_10number}, lines shift right almost linearly with a slow rate as $K$ increases. With $K=10$, over $90\%$ nodes are fully covered by time $40$. In \Cref{f.k2_10tree}, lines shift both downwards and right, and converge to lower values as $K$ increases. Because balanced trees have smaller depth if nodes have larger degree: with $K=10$, the line converges to $4$ in \Cref{f.k2_10tree}. Whatever $K$ is, the max tree depth is below $20$ by time $25$.

\begin{figure}
\begin{subfigure}[l]{0.22\textwidth}
	\post{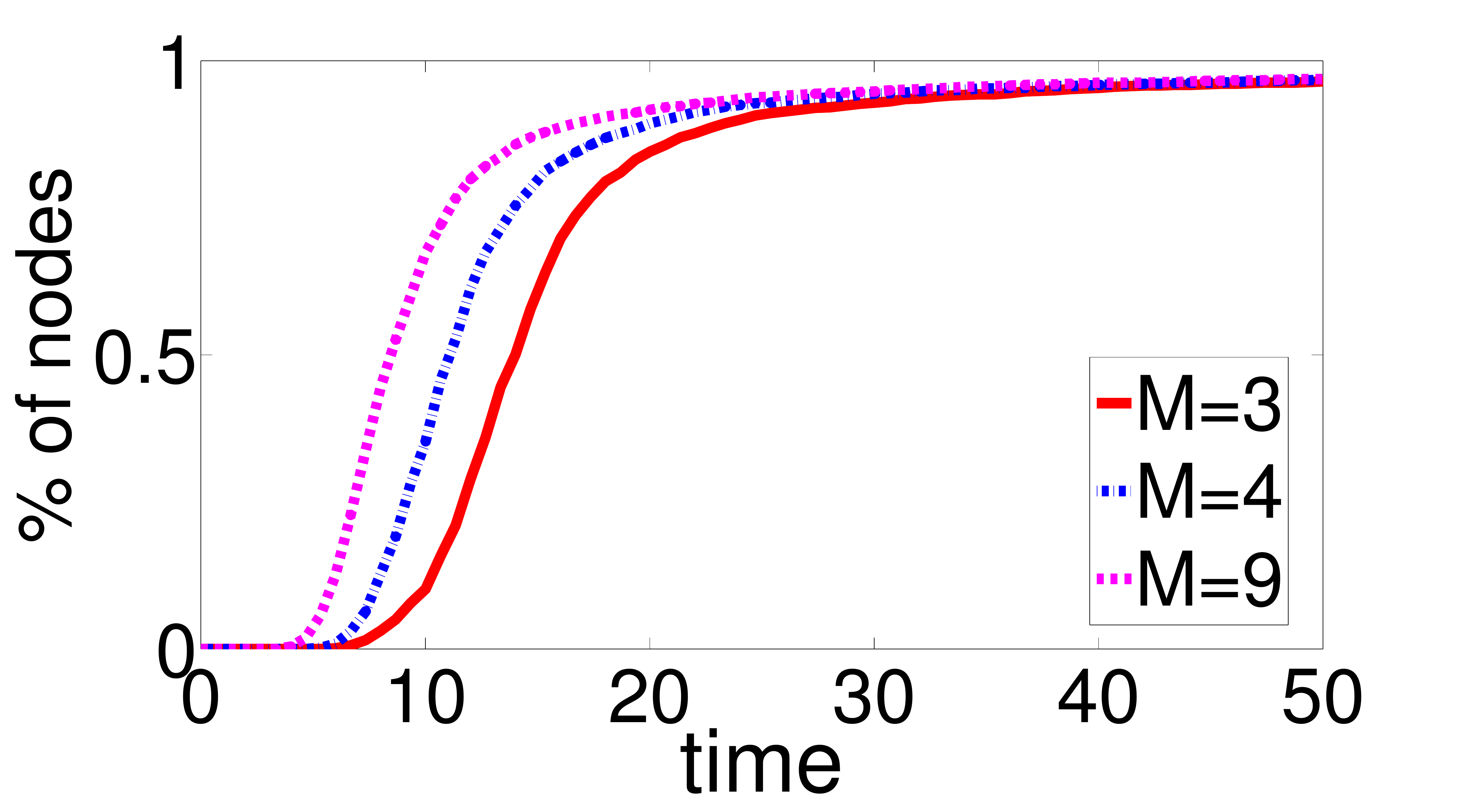}{4.3}
	\caption{$\%$ of nodes fully covered}
    \label{f.M4_7number}
\end{subfigure}
\begin{subfigure}[r]{0.22\textwidth}
	\post{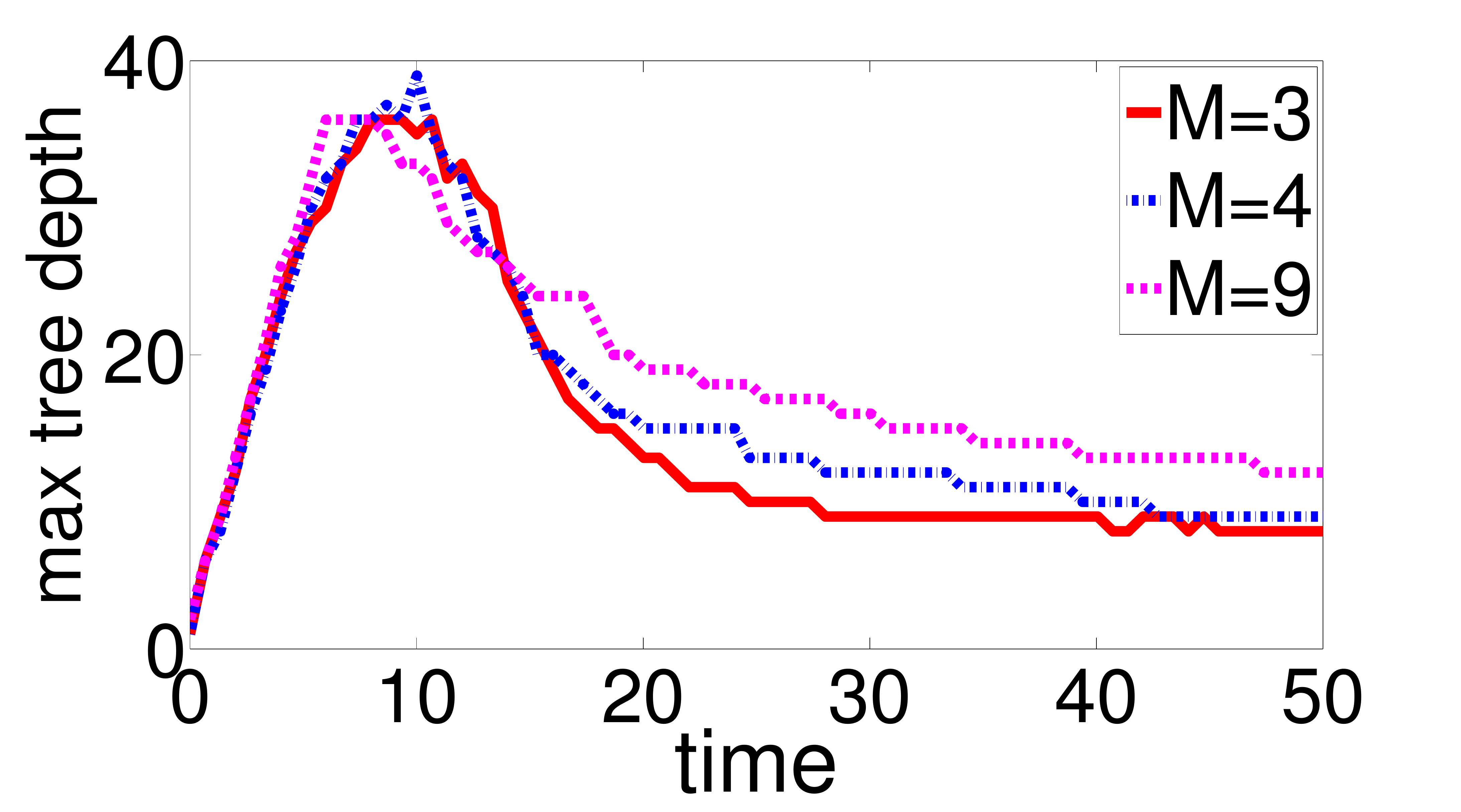}{4.3}
 	\caption{max tree depth}
\label{f.M4_7tree}
\end{subfigure}
\caption{$1\%$ cumulative lines when $M$ varies and $K=3$.}
\label{f.M4_7}
\end{figure}

In \Cref{f.M4_7}, we test the case under source coding by drawing the $1\%$ worst case lines when  $K=3$ and $M$ be in $\{3,4,9\}$, that is, there are more trees than nodes need. Notice that the capacity is still tight. In \Cref{f.M4_7number}, lines shift left as $M$ increases, showing that source coding tends to decrease the convergence time for coverage. In \Cref{f.M4_7tree}, limits slightly increase as $M$ increases. Because when $M$ is larger there are more types of mixed nodes, which may have single children in a tree and thereby increase the depth. That condition is also implied by \Cref{lem.conv}, where the value $c$ is considered to be of  $O(\log M)$. 
Thus, source coding creates a tradeoff between the tree depth limit and the convergence time of coverage. Intuitively and as shown in \Cref{f.M4_7tree}, the increasing of depth limit is of $O(\log M)$ which is small, so it is worth trying source coding to get a faster convergence.

For all simulations above, we test cases where the capacity is tight, which illustrates \Cref{lem.conv} and supports that convergence is exponential. One common feature of curves in \Cref{f.k2number,f.k2_10number,f.M4_7number} is that long tails exist. For example, in \Cref{f.M4_7number}, it takes quite long for curves to arrive at $1$. That is because near the end of the process only a few nodes are available and a few others are not fully covered, and it takes long for these nodes to meet each other by random sampling. Long tails can be eliminated by broadcasting or adding more capacity. Broadcasting is not discussed in this paper, in the following we show experiments where extra capacity exists.

\subsection{Loose capacity constraint eliminates the long tail}

\begin{figure}
\begin{subfigure}[l]{0.22\textwidth}
	\post{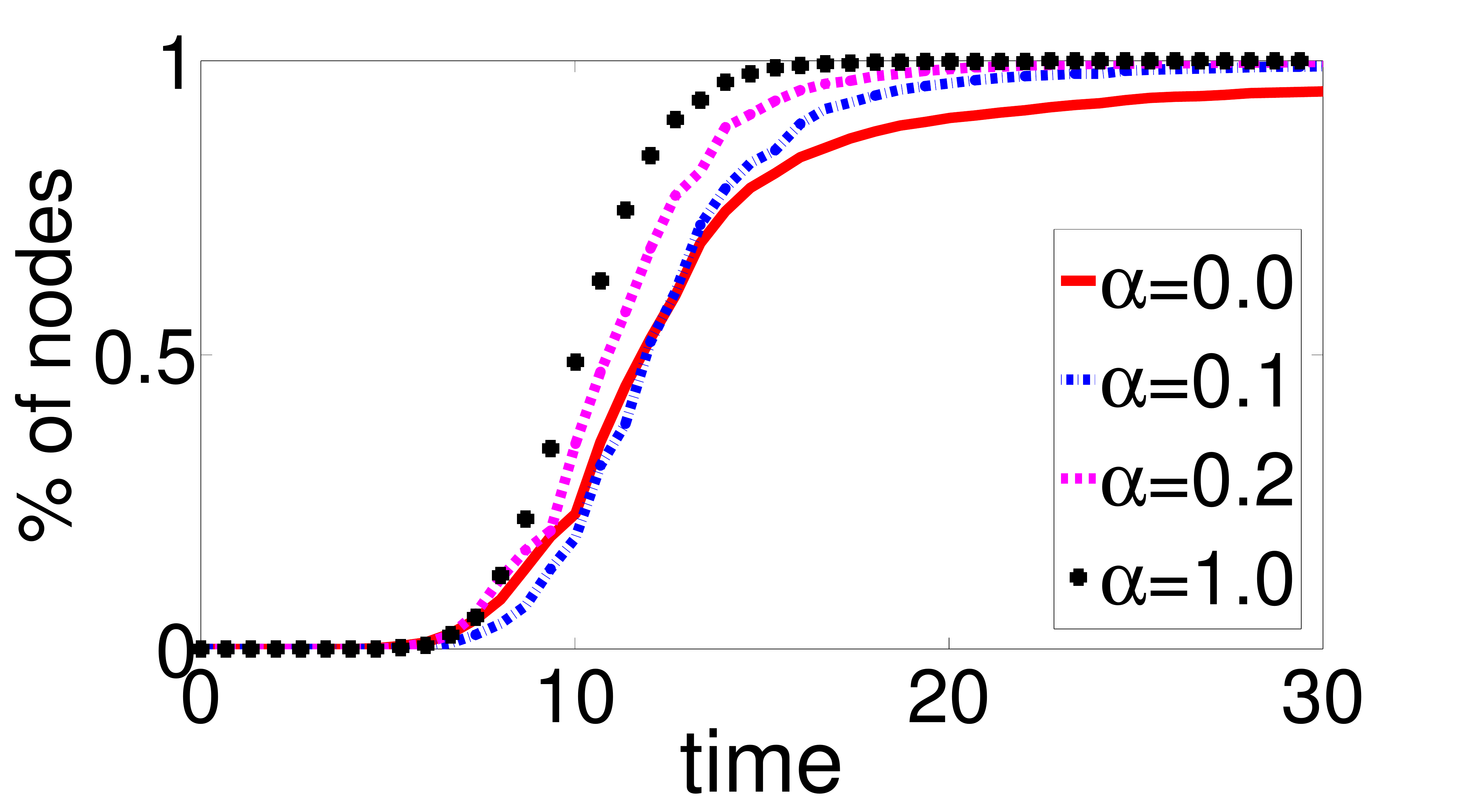}{4.3}
	\caption{$\%$ of nodes fully covered}
    \label{f.tight_rho_number}
\end{subfigure}
\begin{subfigure}[r]{0.22\textwidth}
	\post{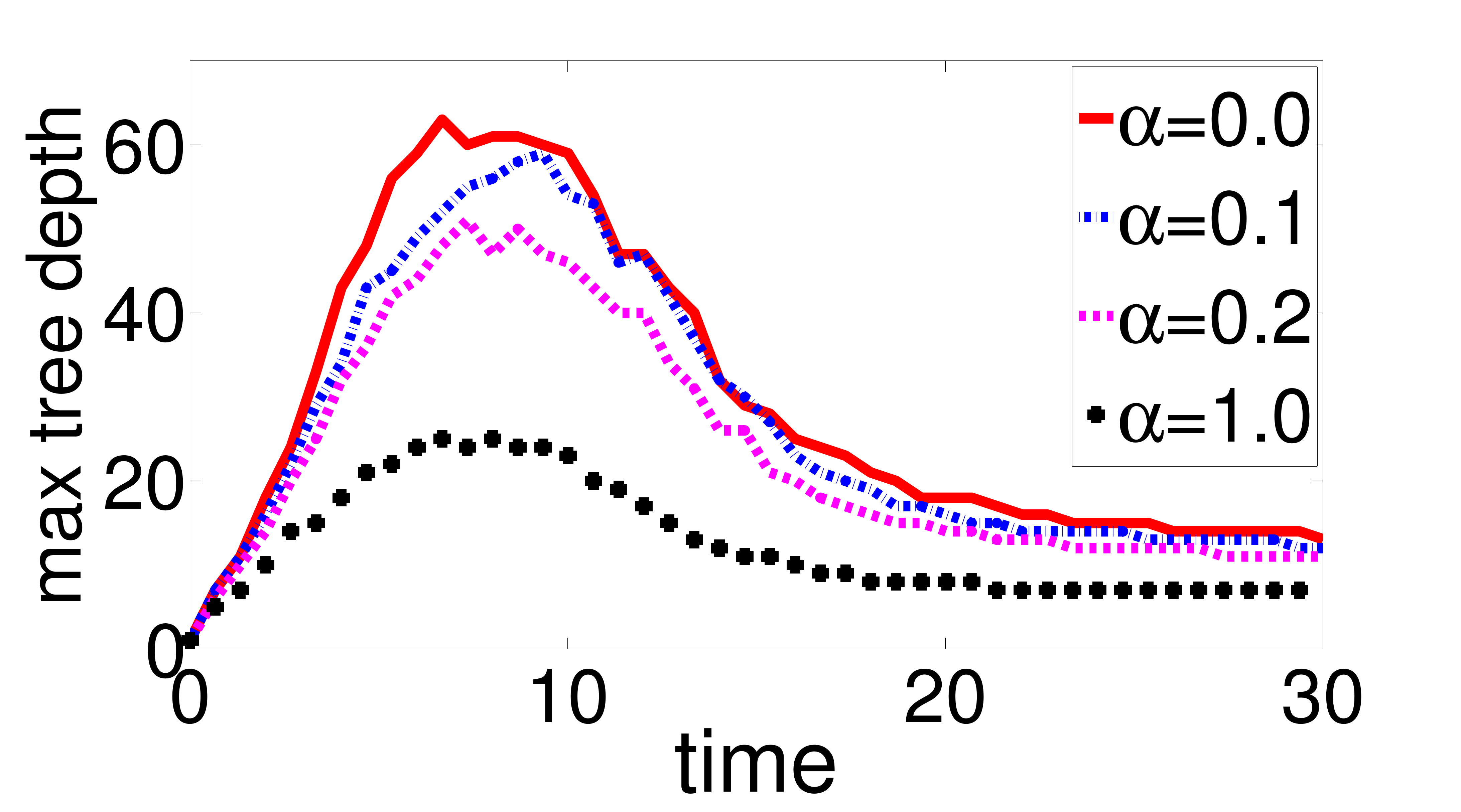}{4.3}
 	\caption{max tree depth}
\label{f.tight_rho_tree}
\end{subfigure}
\caption{$1\%$ cumulative, $M=K=2$ and $\alpha$ varies.}
\label{f.tight_rho}
\end{figure}

We set the total number of degrees $\sum_{u\in V}\bar{d}_u = (1+\alpha) NK$, with a new parameter $\alpha$. To achieve that, we first let each node have degree $K$, then add $\alpha NK$ degrees, one by one, to nodes selected uniformly at random. 
In \Cref{f.tight_rho}, we set $M=K=2$, and let $\alpha$ increase. In \Cref{f.tight_rho_number}, we can see that adding just $10\%$ extra capacity can greatly shorten the tail: all nodes are fully covered by time $25$ as shown by the line $\alpha = 0.1$. The larger $\alpha$ is, the shorter the tail is. When $\alpha=1.0$, all nodes get fully covered by time $15$. In \Cref{f.tight_rho_tree}, as $\alpha$ increases, curves converge faster and limits also decrease.

\subsection{Polarized degrees for the server-client case}

\begin{figure}
\begin{subfigure}[l]{0.22\textwidth}
	\post{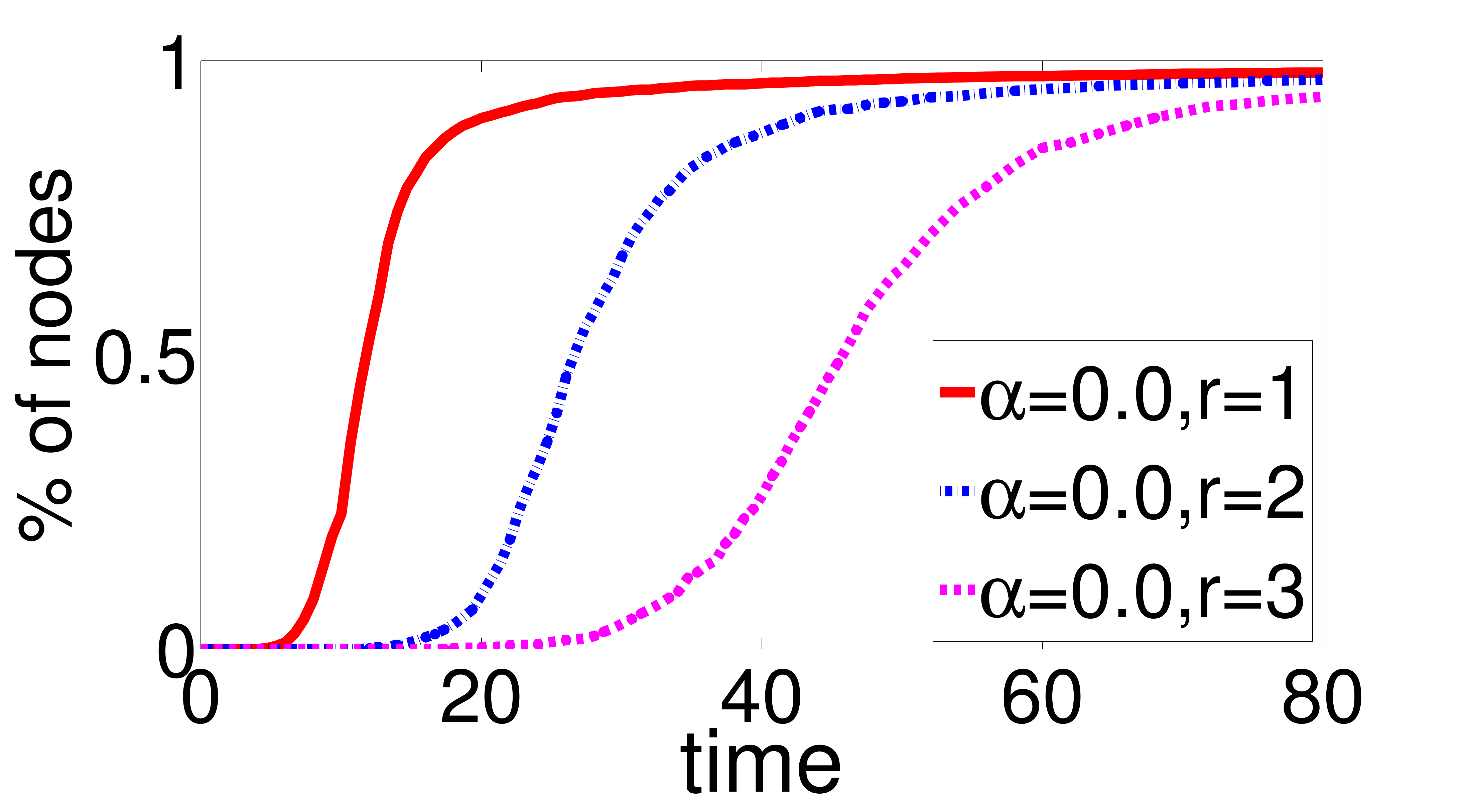}{4.3}
 	\caption{$\%$ of nodes fully covered}
	\label{f.K2_r_n}
\end{subfigure}
\begin{subfigure}[r]{0.22\textwidth}
	\post{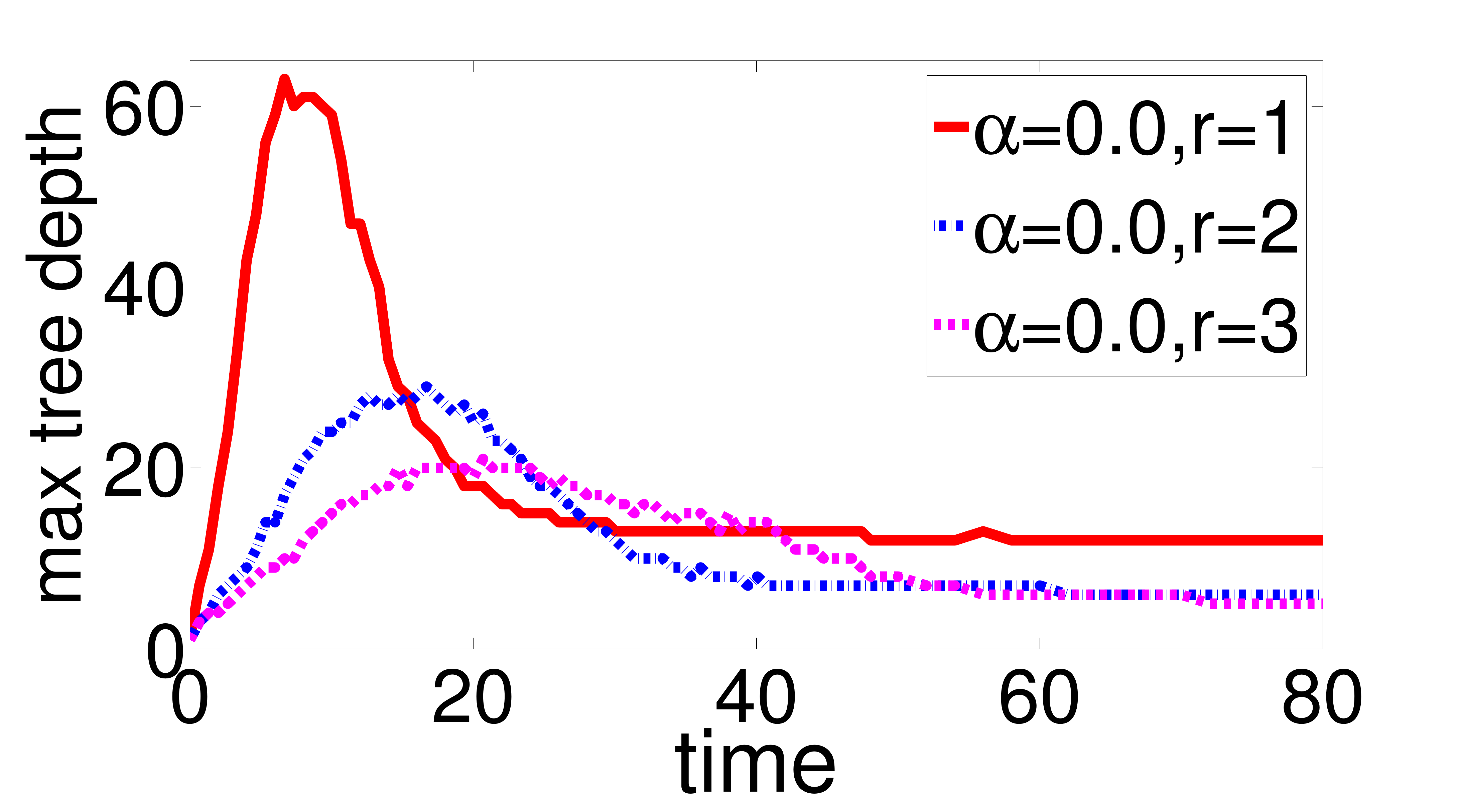}{4.3}
 	\caption{max tree depth}
\label{f.K2_r_t}
\end{subfigure}\\
\begin{subfigure}[l]{0.22\textwidth}
	\post{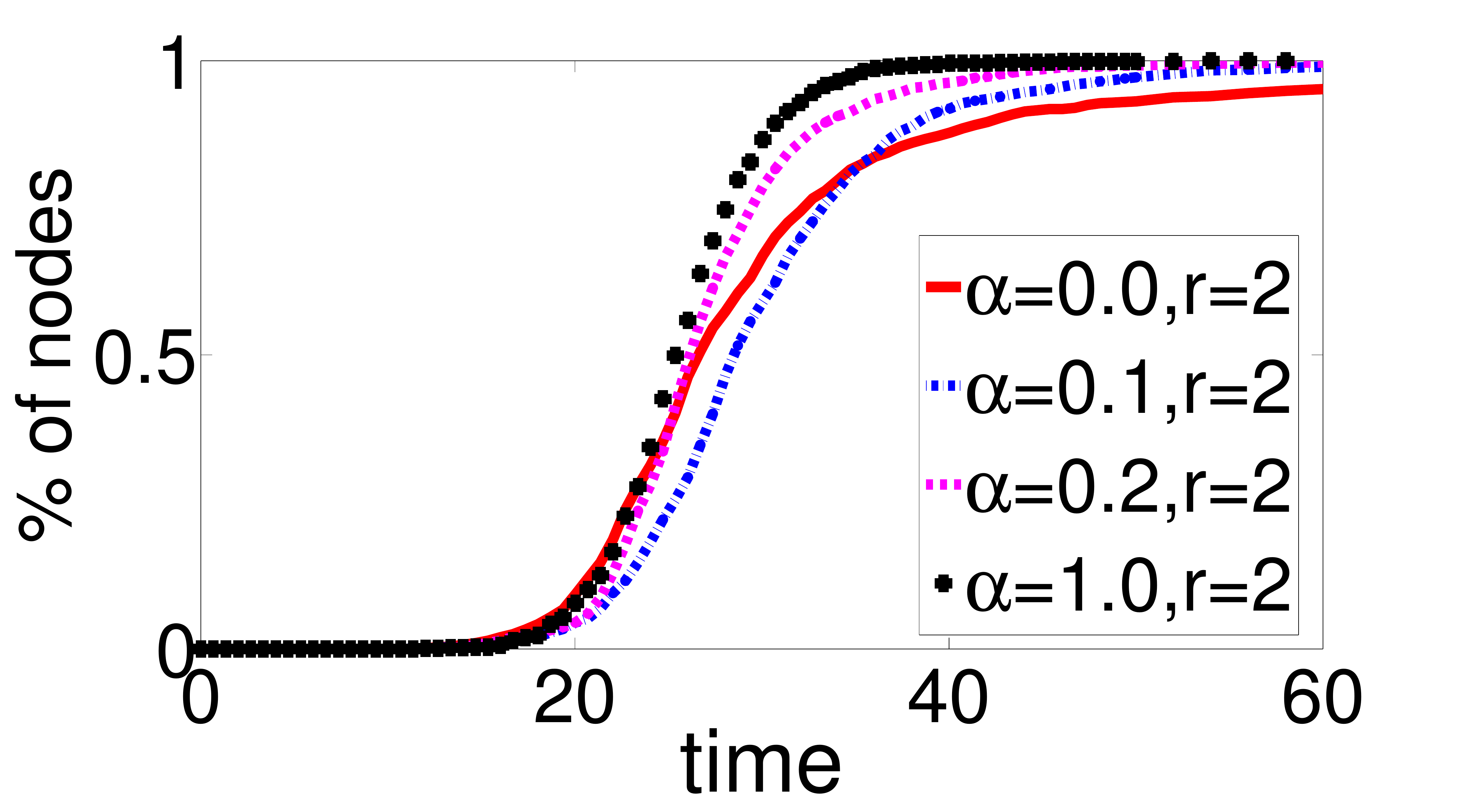}{4.3}
	\caption{$\%$ of nodes fully covered}
   \label{f.K2_degreeServer_n}
\end{subfigure}
\begin{subfigure}[r]{0.22\textwidth}
	\post{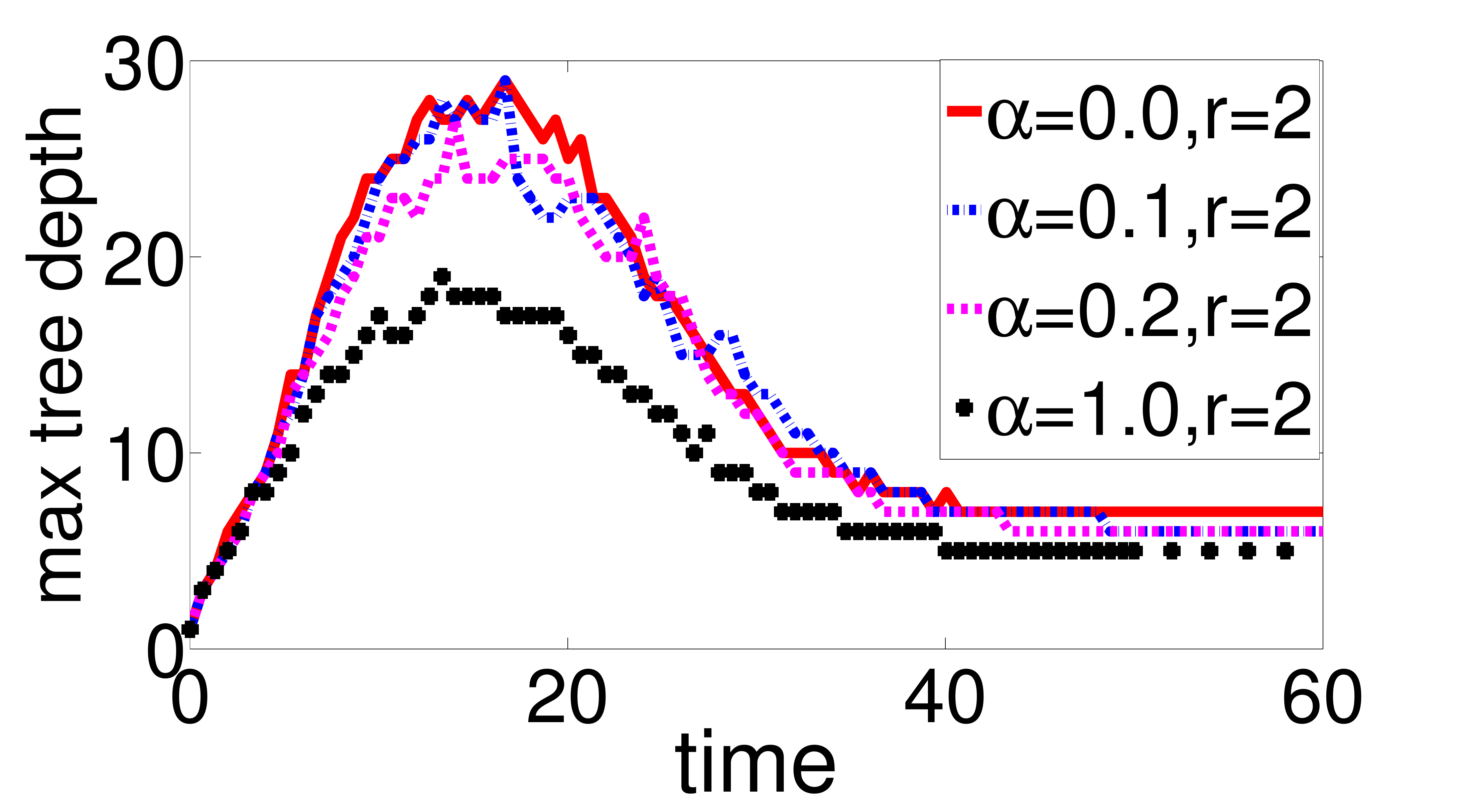}{4.3}
	\caption{max tree depth}
	\label{f.K2_degreeServer_t}
\end{subfigure}
\caption{$1\%$ cumulative. $M=K=2$, ${N/ r}$ server nodes with degree $rK$, $\alpha NK$ degrees added to server nodes randomly.}
\label{f.degreeToServer}
\end{figure}

\begin{figure}
\begin{subfigure}[l]{0.22\textwidth}
	\post{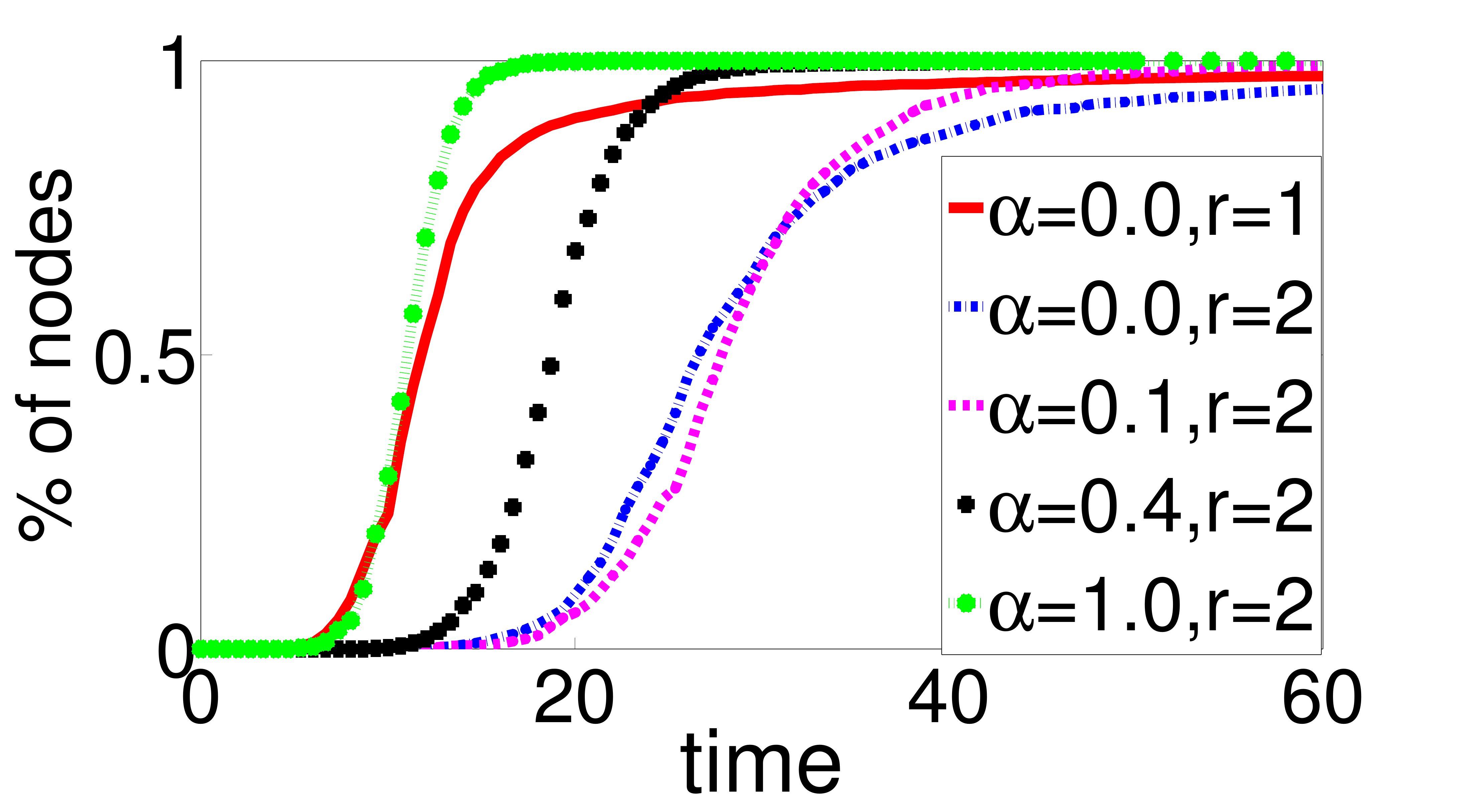}{4.3}
	\caption{$\%$ of nodes fully covered}
   \label{f.K2rho_n}
\end{subfigure}
\begin{subfigure}[r]{0.22\textwidth}
	\post{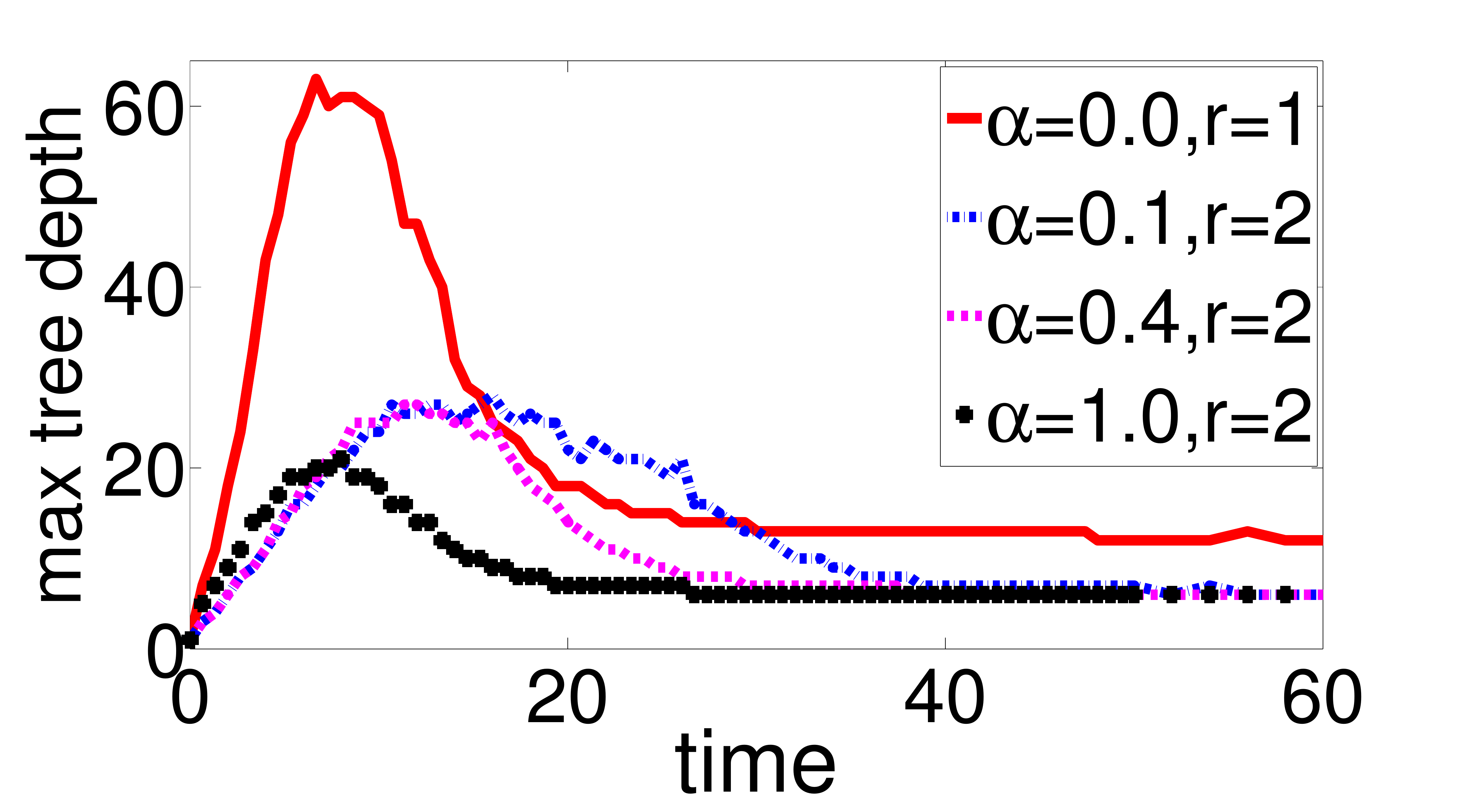}{4.3}
	\caption{max tree depth}
	\label{f.K2rho_t}
\end{subfigure}
\caption{$1\%$ cumulative. $M=K=2$, ${1+\alpha\over r} N$ server nodes with degree $rK$.}
\label{f.polarize}
\end{figure}

In this section, we show that \algwhole{0} works well under the server-client case, where a portion of nodes are server nodes with high degrees and other nodes are client nodes with zero degrees.  
The algorithm favors nodes with higher degree; they tend to get smaller depths than nodes with lower degrees. We expect to observe similar images as in \Cref{f.K2}. Notice that the convergence times will increase because it takes more time for nodes to meet server nodes under uniform random sampling. That problem can be solved by adding mechanisms to help nodes find server nodes. In this paper, we stay focused on the uniform sampling assumption despite the small increase in convergence time. 

In \Cref{f.degreeToServer}, we let $N/r$ server nodes (include roots) have degree $rK$ and all other nodes have degree $0$, then add $\alpha NK $ degrees one by one to server nodes randomly. As expected, in \Cref{f.K2_r_n}, lines shift right as $r$ increase because there are less server nodes, but still increases exponentially from $0$ to $1$. In \Cref{f.K2_r_t}, max tree depth decreases greatly as $r$ increase from $1$ to $2$, and further decreases as $r$ increase. In \Cref{f.K2_degreeServer_n,f.K2_degreeServer_t}, $r$ is fixed at $2$ with $\alpha$ increases, which shows long tails get eliminated like that in \Cref{f.tight_rho}.

In \Cref{f.polarize}, we let ${1+\alpha\over r} N$ server nodes (include roots) have degree $rK$ and all other nodes have degree $0$. We set $r=2$, let $\alpha$ increase, and draw the line at $\alpha=0,r=1$ for comparison.  Notice that lines at $\alpha=0$ in \Cref{f.polarize,f.degreeToServer} are exactly the same. 
 As $\alpha$ increases, there are more server nodes so lines shift left quickly in \Cref{f.K2rho_n}, and max tree depth decreases quickly in \Cref{f.K2rho_t}. When $\alpha=0.1,r=2$, lines in \Cref{f.polarize} are close to lines in \Cref{f.degreeToServer}. As $\alpha$ increases, in \Cref{f.polarize} lines shift left but in \Cref{f.degreeToServer} they do not. That suggests that performance is sensitive to the number of server nodes instead of the degree distribution among server nodes.

 Above all, our simulations validate \Cref{lem.conv,p.rate} by showing that fraction of nodes fully covered increases from $0$ to $1$, and the maximum tree depth decreases to its limit, almost exponentially, under tight or un-tight capacity constraint, homogeneous or heterogeneous capacity distribution,  when \Cref{a.init,a.root,a.link,a.2} hold. The simulations suggest that cycles are eliminated quickly because the fraction of nodes covered is almost increasing over all time; long tails can be eliminated by adding $10\%$ extra capacity; and the algorithm favors nodes with higher degree so it works well under the server-client case too. 
 Convergence times increase with either increasing $K$ or decreasing the chance for nodes to meet server nodes. When parameters change, curves shift with shapes staying similar, revealing  robustness of the algorithm.











\section{Conclusion}\label{sec_con}

In this paper a distributed algorithm to manage multiple
trees for streaming is proposed.  The algorithm can achieve
coverage, balance, and small delay in a short time. The
algorithm works on a complete underlying graph assuming
random sampling among peers is enabled. Future work
may include extending the algorithm for given incomplete
underlying topologies.

\section{Acknowledgments}
This work was supported by the National Science Foundation under Grant NSF CCF 10-16959.

\begin{thebibliography}{10}
\providecommand{\url}[1]{#1}
\csname url@samestyle\endcsname
\providecommand{\newblock}{\relax}
\providecommand{\bibinfo}[2]{#2}
\providecommand{\BIBentrySTDinterwordspacing}{\spaceskip=0pt\relax}
\providecommand{\BIBentryALTinterwordstretchfactor}{4}
\providecommand{\BIBentryALTinterwordspacing}{\spaceskip=\fontdimen2\font plus
\BIBentryALTinterwordstretchfactor\fontdimen3\font minus
  \fontdimen4\font\relax}
\providecommand{\BIBforeignlanguage}[2]{{%
\expandafter\ifx\csname l@#1\endcsname\relax
\typeout{** WARNING: IEEEtran.bst: No hyphenation pattern has been}%
\typeout{** loaded for the language `#1'. Using the pattern for}%
\typeout{** the default language instead.}%
\else
\language=\csname l@#1\endcsname
\fi
#2}}
\providecommand{\BIBdecl}{\relax}
\BIBdecl

\bibitem{liu2010uusee}
Z.~Liu, C.~Wu, B.~Li, and S.~Zhao, ``Uusee: large-scale operational on-demand
  streaming with random network coding,'' in \emph{IEEE INFOCOM}, 2010.

\bibitem{zhang2005coolstreaming}
X.~Zhang, J.~Liu, B.~Li, and Y.~Yum, ``Coolstreaming/{Donet}: a data-driven
  overlay network for peer-to-peer live media streaming,'' in \emph{IEEE
  INFOCOM}, 2005.

\bibitem{zhou2007simple}
Y.~Zhou, D.~Chiu, and J.~Lui, ``A simple model for analyzing {P2P} streaming
  protocols,'' in \emph{IEEE ICNP}, 2007.

\bibitem{wu09}
D.~Wu, Y.~Liu, and K.~W. Ross, ``Queuing network models for multi-channel {P2P}
  live streaming systems,'' in \emph{IEEE INFOCOM}, 2009.

\bibitem{magharei2009prime}
N.~Magharei and R.~Rejaie, ``Prime: Peer-to-peer receiver-driven mesh-based
  streaming,'' \emph{IEEE/ACM Transactions on Networking}, vol.~17, no.~4, pp.
  1052--1065, 2009.

\bibitem{kostic2003bullet}
D.~Kosti{\'c}, A.~Rodriguez, J.~Albrecht, and A.~Vahdat, ``Bullet: High
  bandwidth data dissemination using an overlay mesh,'' in \emph{ACM SIGOPS
  Operating Systems Review}, 2003.

\bibitem{kim_srikant_cycle}
J.~Kim and R.~Srikant, ``Peer-to-peer streaming over dynamic random hamilton
  cycles,'' in \emph{IEEE Information Theory and Applications Workshop}, 2012.

\bibitem{tomozei2010flow}
D.~Tomozei and L.~Massouli{\'e}, ``Flow control for cost-efficient peer-to-peer
  streaming,'' in \emph{IEEE INFOCOM}, 2010.

\bibitem{nguyen2010chameleon}
A.~T. Nguyen, B.~Li, and F.~Eliassen, ``Chameleon: Adaptive peer-to-peer
  streaming with network coding,'' in \emph{IEEE INFOCOM}, 2010.

\bibitem{padmanabhan2003resilient}
V.~N. Padmanabhan, H.~J. Wang, and P.~A. Chou, ``Resilient peer-to-peer
  streaming,'' in \emph{IEEE ICNP}, 2003.

\bibitem{castro2003splitstream}
M.~Castro, P.~Druschel, A.-M. Kermarrec, A.~Nandi, A.~Rowstron, and A.~Singh,
  ``Splitstream: high-bandwidth multicast in cooperative environments,'' in
  \emph{ACM SIGOPS Operating Systems Review}, 2003.

\bibitem{tran2003zigzag}
D.~A. Tran, K.~A. Hua, and T.~Do, ``Zigzag: An efficient peer-to-peer scheme
  for media streaming,'' in \emph{IEEE INFOCOM}, 2003.

\bibitem{zhang2012overlay}
W.~Zhang, Q.~Zheng, H.~Li, and F.~Tian, ``An overlay multicast protocol for
  live streaming and delay-guaranteed interactive media,'' \emph{Journal of
  Network and Computer Applications}, vol.~35, no.~1, pp. 20--28, 2012.

\bibitem{Brosh_multicast}
E.~Brosh and Y.~Shavitt, ``Approximation and heuristic algorithms for minimum
  delay application-layer multicast trees,'' in \emph{IEEE INFOCOM}, 2004.

\bibitem{Gallager_MST}
R.~G. Gallager, P.~A. Humblet, and P.~M. Spira, ``A distributed algorithm for
  minimum-weight spanning trees,'' \emph{ACM Trans. Program. Lang. Syst.},
  vol.~5, no.~1, pp. 66--77, jan 1983.

\bibitem{Jacobsen_multicast}
K.~Mokhtarian and H.-A. Jacobsen, ``Minimum-delay overlay multicast,'' in
  \emph{IEEE INFOCOM}, 2013.

\bibitem{shi2002routing}
S.~Y. Shi and J.~S. Turner, ``Routing in overlay multicast networks,'' in
  \emph{IEEE INFOCOM}, 2002.

\bibitem{widmer2012rate}
J.~Widmer, A.~Capalbo, A.~F. Anta, and A.~Banchs, ``Rate allocation for layered
  multicast streaming with inter-layer network coding,'' in \emph{IEEE
  INFOCOM}, 2012.

\bibitem{goyal2001multiple}
V.~K. Goyal, ``Multiple description coding: Compression meets the network,''
  \emph{IEEE Signal Processing Magazine}, vol.~18, no.~5, pp. 74--93, 2001.

\bibitem{cohen03}
B.~Cohen, ``Incentives build robustness in {BitTorrent},'' in \emph{Workshop on
  Economics of Peer-to-Peer Systems}, 2003.

\bibitem{zhu_swarm}
J.~Zhu, S.~Ioannidis, N.~Hegde, and L.~Massoulie, ``Stable and scalable
  universal swarms,'' in \emph{ACM PODC}, 2013.

\bibitem{ji2011}
J.~Zhu and B.~Hajek, ``Stability of a peer-to-peer communication system,''
  \emph{IEEE Transactions on Information Theory}, vol.~58, no.~7, pp. 4693
  --4713, July 2012.

\bibitem{norros2011stability}
I.~Norros, H.~Reittu, and T.~Eirola, ``On the stability of two-chunk
  file-sharing systems,'' \emph{Queueing Systems}, vol.~67, no.~3, pp.
  183--206, 2011.

\bibitem{massoulie2005}
L.~Massouli{\'e} and M.~Vojnovi{\'c}, ``Coupon replication systems,''
  \emph{IEEE/ACM Transactions on Networking (TON)}, vol.~16, no.~3, pp.
  603--616, 2008.

\bibitem{menasch10}
D.~S. Menasch\'{e}, L.~Massouli\'{e}, and D.~Towsley, ``Reciprocity and barter
  in peer-to-peer systems,'' in \emph{IEEE INFOCOM}, 2010, pp. 1505--1513.

\bibitem{bruce2011}
B.~Hajek and J.~Zhu, ``The missing piece syndrome in peer-to-peer
  communication,'' \emph{Stochastic Systems}, vol.~1, pp. 246 --273, 2011.

\bibitem{oguz2012stable}
B.~O\u{g}uz, V.~Anantharam, and I.~Norros, ``Stable, distributed p2p protocols
  based on random peer sampling,'' in \emph{IEEE Allerton}, 2012, pp. 915--919.

\bibitem{sanghavi2007gossiping}
S.~Sanghavi, B.~Hajek, and L.~Massouli{\'e}, ``Gossiping with multiple
  messages,'' \emph{IEEE Transactions on Information Theory}, vol.~53, no.~12,
  pp. 4640--4654, 2007.

\end{thebibliography}

\end{document}